\newtheorem{proposition}{Proposition}
\newtheorem{proposition?}{Proposition?}
\newtheorem{theorem}{Theorem}
\newtheorem{lemma}{Lemma}
\theoremstyle{definition}
\newtheorem{example}{Example}
\newtheorem{definition}{Definition}
\newcommand{\complex}{\mathbb C} 
\newcommand{\nat}{\mathbb N} 
\newcommand{\half}{\tfrac{1}{2}} 
\newcommand{\hi}{\mathcal{H}} 
\newcommand{\hik}{\mathcal{K}} 
\newcommand{\hv}{\mathcal{V}} 
\newcommand{\lh}{\mathcal{L(H)}} 
\newcommand{\lk}{\mathcal{L(K)}} 
\newcommand{\trh}{\mathcal{T(H)}} 
\newcommand{\trk}{\mathcal{T(K)}} 
\newcommand{\sh}{\mathcal{S(H)}} 
\newcommand{\ip}[2]{\left\langle\,#1\,|\,#2\,\right\rangle} 
\newcommand{\kb}[2]{|#1\rangle\langle#2|} 
\newcommand{\no}[1]{\left\|#1\right\|} 
\newcommand{\tr}[1]{\textrm{tr}\left[#1\right]} 
\newcommand{\trvin}[1]{\textrm{tr}_{\hv_1}[#1]} 
\newcommand{\trvout}[1]{\textrm{tr}_{\hv_2}[#1]} 
\newcommand{\id}{\mathbbm{1}} 
\newcommand{\dev}{\mathfrak{D}}
\newcommand{\A}{\mathsf{A}}
\newcommand{\E}{\mathsf{E}}
\newcommand{\F}{\mathsf{F}}
\newcommand{\G}{\mathsf{G}}
\newcommand{\I}{\mathcal{I}}
\newcommand{\memo}{\mathfrak{M}}
\newcommand{\swap}{U_{\mathsf{swap}}}
\newcommand{\hh}{^H} 
\renewcommand{\ss}{^S} 
\begin{document}

\title{Strongly Incompatible Quantum Devices}

\author[Heinosaari]{Teiko Heinosaari$^\natural$}
\address{$\natural$ Turku Centre for Quantum Physics, Department of Physics and Astronomy, University of Turku}
\email{teiko.heinosaari@utu.fi}

\author[Miyadera]{Takayuki Miyadera$^\flat$}
\address{$\flat$ Department of Nuclear Engineering, Kyoto University, 6158540 Kyoto, Japan}
\email{miyadera@nucleng.kyoto-u.ac.jp}

\author[Reitzner]{Daniel Reitzner$^\clubsuit$}
\address{$\clubsuit$ Department of Mathematics, Technische Universit\"at M\"unchen, 85748 Garching, Germany
\newline \indent and \newline
 Research Center for Quantum Information, Slovak Academy of Sciences, D\'ubravsk\'a cesta 9, 845 11 Bratislava, Slovakia}
\email{daniel.reitzner@savba.sk}

\date{\today}

\begin{abstract}
The fact that there are quantum observables without a simultaneous measurement is one of the fundamental characteristics of quantum mechanics.
In this work we expand the concept of joint measurability to all kinds of possible measurement devices, and we call this relation \emph{compatibility}. 
Two devices are \emph{incompatible} if they cannot be implemented as parts of a single measurement setup.
We introduce also a more stringent notion of incompatibility, \emph{strong incompatibility}. 
Both incompatibility and strong incompatibility are rigorously characterized and their difference is demonstrated by examples. 
\end{abstract}


\maketitle

\section{Introduction}\label{sec:intro}

Incompatibility of two quantum observables means that they cannot be implemented in a single measurement setup.  
The existence of incompatible observables is a genuine quantum phenomenon, and it is perhaps most notably manifested in various uncertainty relations. 
The best known examples of incompatible observables are the spin components in orthogonal directions and the canonical pair of position and momentum.

Most of the earlier studies on incompatibility have been concentrating on observables and effects (see e.g.~\cite{Lahti03} for a survey).
In this work we define the notion of incompatibility in a general way so that it becomes possible to speak about incompatibility of two different types of devices, e.g.~incompatibility of an observable and a channel, or an effect and an operation.
Our proposed definition is a straightforward generalization of the usual one for observables; two devices are incompatible if they cannot be parts of a single measurement setup.  

Our approach provides the possibility to separate two qualitatively different levels of incompatibility.
Namely, we will define the concept of strong incompatibility and demonstrate that this is, indeed, more stringent condition than mere incompatibility.
Strongly incompatible devices cannot be implemented on the same measurement setup even if we were allowed to change one of its specific parts, the pointer observable.

It is illustrative to note some similarities between (strong) incompatibility and entanglement.
Typically, entanglement is defined through its negation -- separability.
In a similar way, incompatibility is defined through its negation -- compatibility.
It is easy to intuitively grasp the notions of separable states and compatible measurements, while entangled states and incompatible measurements are harder to comprehend.
One of the best indication of a quantum regime, namely a violation of a Bell inequality, requires both an entangled state and a collection of incompatible observables.

Our investigation is organized as follows.
In Section \ref{sec:incompatible-devices} we define incompatibility and strong incompatibility using the property ``being part of an instrument'' as a starting point.
In Section \ref{sec:formulations} we show that it is possible to formulate these relations in terms of Stinespring and Kraus representations.
The connection of incompatibility and strong incompatibility to measurement models is explained in Section \ref{sec:memo}, and this clarifies the operational meaning of the relations.
In Section \ref{sec:examples} we demonstrate all possible relations between operations and effects.
In particular, it will become clear that strong incompatibility is a stricter relation than mere incompatibility.

To this end, let us fix the notation.
Let $\hi$ be either finite or countably infinite dimensional complex Hilbert space. We denote by $\lh$ and $\trh$ the Banach spaces of bounded operators and trace class operators on $\hi$, respectively. The set of quantum states (i.e.~positive trace one operators) is denoted by $\sh$.
In this paper, for simplicity, we treat only finite sets of measurement outcomes, while 
many statements can be easily extended also to infinite outcome sets.  
We denote by $\Omega$ (or $\Omega', \Omega_a$, etc.) a finite set of measurement outcomes.\footnote{This set is equipped with the natural $\sigma$-algebra $\mathcal{F}=2^{\Omega}$ containing all subsets of $\Omega$. 
Thus $X\subseteq \Omega$ is equivalent to $X\in \mathcal{F}$ in this paper, while the latter should be employed in treating infinite outcome set $\Omega$.}

\section{Incompatible devices}\label{sec:incompatible-devices}

\subsection{Input-output devices}\label{sec:devices}

A quantum state is described by a density matrix, while a classical state is described by a probability distribution.
By a \emph{quantum device} (or shortly device), usually denoted by $\dev$, we mean an apparatus that takes a quantum input and produces either a classical output ($\rightarrow$~observable), a quantum output ($\rightarrow$~channel) or both ($\rightarrow$~instrument) --- see Fig.~\ref{fig:device}.
We can also consider probabilistic output, which means that an output is obtained with some probability that can be less than $1$.
Again, the output can be either classical ($\rightarrow$~effect) or quantum ($\rightarrow$~operation).

\begin{figure}
\begin{center}
\includegraphics[width=6.0cm]{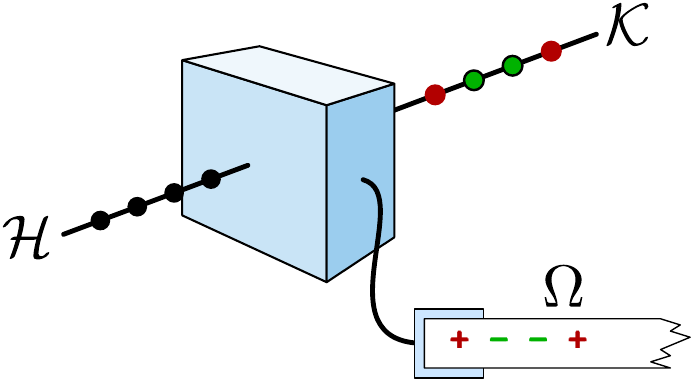}
\end{center}
\caption{\label{fig:device} A quantum device is an apparatus having Hilbert space $\hi$ as an input and as an output either a Hilbert space $\hik$, or a measurement outcome set $\Omega$, or both.
}
\end{figure}

We denote by $\hi,\hik$ two fixed Hilbert spaces associated with the input and output systems, respectively. 
The precise definitions of the previously mentioned five quantum devices are the following (see e.g. \cite{MLQT12}).

\begin{itemize}
\item An \emph{effect} is an operator $E\in\lh$ satisfying $0\leq E \leq \id$.
\item An \emph{observable} is a map $\A:\Omega\to\lh$ such that $\A(x)$ is an effect for all $x\in\Omega$ and $\sum_{x\in\Omega} \A(x)=\id$.
We denote $\A(X)\equiv \sum_{x\in X} \A(x)$ for every $X\subseteq\Omega$.
\item An \emph{operation} is 
in the Heisenberg picture a normal completely positive map $\Phi\hh:\lk\to\lh$ satisfying $\Phi\hh(\id_\hik)\leq \id_\hi$.
An operation in the Schr\"odinger picture 
is a completely positive map 
$\Phi\ss: \trh \to \trk$ satisfying 
$\tr{\Phi\ss(\varrho)}\leq \tr{\varrho}$ for all $\varrho \in \sh$. 
\item A \emph{channel} is an operation $\Lambda$ that satisfies
 $\Lambda\hh(\id_\hik) = \id_\hi$ in the Heisenberg picture, or
 $\tr{\Lambda\ss(\varrho)}=\tr{\varrho}$ 
for all $\varrho \in \sh$  in the Schr\"odinger picture.
\item An \emph{instrument} is in the Heisenberg picture a map $\I\hh:\Omega\times\lk\to\lh$
such that each $\I\hh(x,\cdot)$ is an operation and $\sum_{x\in\Omega} \I\hh(x,\cdot)$ is a channel. 
We denote $\I(X,\cdot)\equiv \sum_{x\in X} \I(x,\cdot)$ for every 
$X\subseteq \Omega$. 
An instrument in the Schr\"odinger picture is 
a map $\I\ss : \Omega\times \trh\to \trk$ 
such that each $\I\ss(x,\cdot)$ is an operation and 
$\I\ss({\Omega},\cdot)$ is a channel. 

\end{itemize}

We will use superscripts $\ss$ and $\hh$ for the Schr\"odinger 
and Heisenberg pictures, respectively. 
If a statement or equation is identical in both Heisenberg and Schr\"odinger pictures, then we may leave the superscript out. Also we will often use $\cdot$ as a placeholder for appropriate variable when it is evident. 

It is clear that an instrument is the most comprehensive description among the five devices since it has both quantum and classical output.
We can thus introduce the notion ``part of an instrument'' for any of the above five devices. 

\begin{definition}\label{def:part-of-i-1}
Let $\I$ be an instrument. We say that:
\begin{itemize}
\item an effect $E$ is part of $\I$ if there exists a set $X\subseteq\Omega$ such that
\begin{align}\label{eq:effect}
E =  \I\hh(X,\id) \, ;
\end{align}
\item an operation/channel $\Phi$ is part of $\I$ if there exists a set $X\subseteq\Omega$ such that
\begin{equation}
\Phi(\cdot) = \I(X,\cdot) \, .
\end{equation}
\end{itemize}
\end{definition}

If an effect/operation is part of an instrument, then we can think that the former gives a \emph{partial} mathematical description of some quantum apparatus, while the instrument gives a \emph{complete} mathematical description of the apparatus in question. 

The following simple fact will be used on several occasions in our investigation.

\begin{lemma}\label{lemma:span}
Any operator $T\in\lh$ can be written as a linear combination of four effects.
\end{lemma}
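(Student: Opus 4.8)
The plan is to reduce to the self-adjoint case and then realize each self-adjoint operator as a combination of two effects. First I would split an arbitrary $T\in\lh$ into its self-adjoint and anti-self-adjoint parts,
\[
T = A + iB, \qquad A = \half(T+T^*), \qquad B = \tfrac{1}{2i}(T-T^*),
\]
so that $A=A^*$ and $B=B^*$. Since $\lh$ consists of bounded operators, both $A$ and $B$ have finite operator norm, which is the only property I really need.

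Next I would treat a single bounded self-adjoint operator, say $A$, and set $c:=\no{A}$. If $c=0$ then $A=0$ is trivially a combination of effects, so assume $c>0$. From the operator bound $-c\,\id \le A \le c\,\id$ one gets $0 \le \half(\id + A/c) \le \id$, so that
\[
E_A := \half\!\left(\id + \tfrac{1}{c}A\right)
\]
is an effect. Inverting this relation yields $A = 2c\,E_A - c\,\id$, which exhibits $A$ as a linear combination of the two effects $E_A$ and $\id$ (note that $\id$ itself is an effect, since $0\le\id\le\id$).

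Finally I would apply the identical construction to $B$, producing an effect $E_B$ with $B = 2\no{B}\,E_B - \no{B}\,\id$, and assemble everything as
\[
T = 2\no{A}\,E_A - \no{A}\,\id + i\bigl(2\no{B}\,E_B - \no{B}\,\id\bigr),
\]
a linear combination of the four effects $E_A,\id,E_B,\id$ (the two copies of $\id$ could of course be merged to leave only three). There is no genuine obstacle in this argument; the only points needing care are the finiteness of $\no{A}$ and $\no{B}$, which is guaranteed by boundedness, and the degenerate cases $\no{A}=0$ or $\no{B}=0$, both of which are immediate.
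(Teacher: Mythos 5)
Your proof is correct and follows essentially the same route as the paper: split $T$ into self-adjoint and anti-self-adjoint parts, then express each bounded self-adjoint operator using a shift by a multiple of $\id$ and a rescaling. The only (harmless) difference is that the paper writes each self-adjoint part as a difference $S_+-S_-$ of two positive operators and normalizes each to an effect, whereas you use the single effect $\half(\id+A/\no{A})$ together with $\id$ itself, which in fact shows three effects suffice.
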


\begin{proof}
We can decompose $T$ into two self-adjoint operators $T=T_R + i T_I$, where $T_R=\half (T+T^*)$ and $T_I=\frac{1}{2i} (T-T^*)$.
Further, any self-adjoint operator $S$ can be written as a difference of two positive operators $S=S_+-S_-$,  where $S_+=\half (\no{S}\id + S)$ and $S_-=\half (\no{S}\id - S)$. 
Finally, any positive operator $P$ can be written as a scalar multiple of an effect since $P=\no{P} \left( P/\no{P} \right)$ and $0\leq P/\no{P}\leq\id$.
\end{proof}

\begin{proposition}\label{prop:channel-part-instrument}
A channel $\Lambda$ is part of an instrument $\I$ with an outcome set $\Omega$
if and only if $\Lambda(\cdot)=\I(\Omega,\cdot)$ holds. 
\end{proposition}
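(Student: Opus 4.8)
The plan is to dispatch the two implications separately: the backward one is immediate, and the forward one needs a short positivity argument. For the ``if'' direction, suppose $\Lambda(\cdot) = \I(\Omega,\cdot)$; then the choice $X = \Omega$ in Definition \ref{def:part-of-i-1} directly witnesses that $\Lambda$ is part of $\I$, so nothing further is required.

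For the ``only if'' direction, assume $\Lambda$ is part of $\I$. By Definition \ref{def:part-of-i-1} there is a subset $X \subseteq \Omega$ with $\Lambda(\cdot) = \I(X,\cdot)$. I would introduce the complementary label set $Y = \Omega\setminus X$ and study the residual operation $\I(Y,\cdot)$, which by additivity satisfies $\I(X,\cdot) + \I(Y,\cdot) = \I(\Omega,\cdot)$. The crucial input is that both $\Lambda$ and $\I(\Omega,\cdot)$ are channels, hence trace-preserving in the Schr\"odinger picture. Subtracting the two trace-preservation identities then yields, for every state $\varrho\in\sh$,
\begin{equation}
\tr{\I\ss(Y,\varrho)} = \tr{\I\ss(\Omega,\varrho)} - \tr{\I\ss(X,\varrho)} = \tr{\varrho} - \tr{\Lambda\ss(\varrho)} = 0 .
\end{equation}

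Since $\I\ss(Y,\cdot)$ is an operation, it is in particular a positive map, so $\I\ss(Y,\varrho)\geq 0$; a positive operator of vanishing trace is the zero operator, and therefore $\I\ss(Y,\varrho) = 0$ for every state $\varrho$. Because states span $\trh$, linearity extends this to $\I\ss(Y,\cdot) = 0$ on the whole trace class, whence $\Lambda\ss = \I\ss(X,\cdot) = \I\ss(\Omega,\cdot)$, as claimed. The only genuine step is the passage from ``trace zero on every state'' to ``identically zero,'' which rests on positivity followed by linear extension; I expect no real obstacle beyond this bookkeeping. An equivalent route runs entirely in the Heisenberg picture, where $\I\hh(Y,\id) = \id - \id = 0$ forces the completely positive map $\I\hh(Y,\cdot)$ to annihilate every effect, hence $\I\hh(Y,\cdot) = 0$ by Lemma \ref{lemma:span}.
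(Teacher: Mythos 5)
Your proof is correct. The paper proves the ``only if'' part in the Heisenberg picture and by contradiction: assuming $\I\hh(\Omega\setminus X,\cdot)\neq 0$, it uses Lemma~\ref{lemma:span} to produce an effect $T$ with $\I\hh(\Omega\setminus X,T)\neq 0$, infers $\I\hh(\Omega\setminus X,\id)\neq 0$, and contradicts the normalization $\id=\Lambda\hh(\id)+\I\hh(\Omega\setminus X,\id)$. Your primary argument is the Schr\"odinger-picture dual of this: trace preservation of the two channels gives $\tr{\I\ss(Y,\varrho)}=0$, positivity upgrades a trace-zero positive operator to the zero operator, and linearity finishes. The two routes are logically equivalent --- both reduce to ``a positive map whose normalization is already exhausted must vanish'' --- but yours is organized as a direct argument rather than a reductio, which reads a little more cleanly; the Heisenberg-picture variant you sketch at the end (from $\I\hh(Y,\id)=0$ squeeze $0\leq \I\hh(Y,T)\leq \I\hh(Y,\id)$ for effects $T$, then extend by Lemma~\ref{lemma:span}) is essentially the paper's own argument run forwards instead of by contradiction. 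The one step worth making explicit in your main route is the claim that states span $\trh$; it is true, and follows from the same four-term decomposition as Lemma~\ref{lemma:span} applied to trace-class operators, so there is no gap, but it deserves a word since the paper only records that lemma for bounded operators.
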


\begin{proof}
The ``if'' part is trivial. 
Let us consider the ``only if'' part. 
Suppose $\Lambda\hh(\cdot)=\I\hh(X,\cdot)$ for some $X\subset \Omega$. 
We make a counter assumption that $\I\hh(X,\cdot) \neq \I\hh(\Omega,\cdot)$. 
This implies that there exists $T\in\lk$ such that $\I\hh(\Omega\setminus X,T)\neq 0$. 
By Lemma~\ref{lemma:span} this $T$ can be taken so as to satisfy $0\leq T\leq \id$.
It follows that $\I\hh(\Omega\setminus X,\id)\neq 0$. 
Hence
\begin{align*}
\id & = \I\hh(X,\id) + \I\hh(\Omega\setminus X,\id) = \Lambda\hh(\id)+\I\hh(\Omega\setminus X,\id) \\ 
& = \id + \I\hh(\Omega\setminus X,\id) \neq \id \, .
\end{align*}
This contradiction means that the counter assumption is false.
\end{proof}

Observables and instruments do not describe single events but collections of possible events. 
While for effects, operations and channels the labeling of measurement outcomes is irrelevant, for observables and instruments this is part of their description.
Since the measurement outcomes can be regrouped and relabeled after the measurement is performed, we include a \emph{pointer function} into our description.

\begin{definition}\label{def:part-of-i-2}
Let $\I$ be an instrument. We say that:
\begin{itemize}
\item an observable $\A$ with an outcome set $\Omega'$ is part of $\I$ if there exists a function $f:\Omega\to\Omega'$ such that
\begin{align}\label{eq:part-observable}
\A(X) =  \I\hh(f^{-1}(X),\id) 
\end{align}
for all $X\subseteq\Omega'$;
\item an instrument $\I'$  with an outcome set $\Omega'$ is part of $\I$ if there exists a function $f:\Omega\to\Omega'$ such that
\begin{equation}\label{eq:part-instrument}
\I'(X,\cdot) = \I(f^{-1}(X),\cdot)
\end{equation}
for all $X\subseteq\Omega'$.
\end{itemize}
\end{definition}

Let us remark that since $\Omega$ and $\Omega'$ are finite sets, it is enough to require \eqref{eq:part-observable} and \eqref{eq:part-instrument} for all singleton sets $X=\{x\}$, $x\in\Omega'$.
The equality for other sets then follows from the fact that $f^{-1}(X\cup Y)=f^{-1}(X)\cup f^{-1}(Y)$ for all subsets $X,Y\subseteq\Omega'$.

\begin{example}
(\emph{Every device is part of some instrument.})\label{ex:partofinstrument}
For any given device we can construct an instrument that has that device as its part.
The following simple constructions also show that there are always uncountably many different instruments with that property.

Let $E$ be an effect.
We fix a state $\varrho_0$ and define an instrument $\I$ with the outcome set $\{0,1\}$ by
$\I\hh(0,T)=\tr{\varrho_0 T}E$ and $\I\hh(1,T)=\tr{\varrho_0 T}(\id -E)$. 
Then $\I\hh(0,\id)= E$ and thus $E$ is part of $\I$.

Let $\A$ be an observable with an outcome set $\Omega$.
We fix a state $\varrho_0$ and define an instrument $\I$ with the outcome set $\Omega$ by $\I\hh(x,T)=\tr{\varrho_0 T}\A(x)$. 
Then $\I\hh(x,\id)= \A(x)$ and thus $\A$ is part of $\I$.

Let $\Phi$ be an operation.
We fix a state $\varrho_0$ and define an instrument $\I$ with an outcome set $\{0,1\}$ by $\I\hh(0,T)=\Phi\hh(T)$ 
and 
$\I\hh(1,T)=\tr{\varrho_0 T} (\id -\Phi\hh(\id))$.

In all of the previous three instances we are free to choose $\varrho_0$, hence we have uncountably many different instruments that have given device as its part (and these still need not be all the possibilities).

Let $\Lambda$ be a channel. 
Then the above construction for operations becomes trivial and gives only a single instrument. 
An alternative construction gives again infinitely many different instruments.
Namely, we fix a probability distribution $p$ on an outcome set $\Omega$.
We define an instrument $\I$ with the outcome set $\Omega$ by $\I(x,\cdot) = p(x)\Lambda(\cdot)$.
\end{example}

\subsection{Incompatiblility}\label{sec:usual}

The key idea behind the notion of compatibility is the fact that we can duplicate the classical measurement outcome data and process it in various different ways. 
In this way, two quite different devices can be parts of a single instrument.
The interesting cases are those where this kind of duplication cannot help in implementing two different devices.
Then the devices are incompatible and they manifest a significant feature of quantum theory. 

\begin{definition}\label{def:comp}
Two devices $\dev_1$ and $\dev_2$ are \emph{compatible} if 
there exists an instrument that has both $\dev_1$ and $\dev_2$ as its parts; otherwise $\dev_1$ and $\dev_2$ are \emph{incompatible}.
\end{definition}

This definition is a direct generalization of the notion of coexistence of two operations \cite{HeReStZi09}.
We also have the following results, which conclude that compatibility generalizes the notions of coexistence of effects \cite{SEO83} and joint measurability of observables \cite{LaPu97}.

\begin{proposition}
The compatibility of observables and effects reduces to the standard relations:
\begin{itemize}
\item[(a)] Two effects $E_1$ and $E_2$ are compatible if and only if they are coexistent, i.e.~there exists an observable $\G$ with an outcome set $\Omega$ such that 
\begin{equation}\label{eq:coex}
E_1=\G(X_1) \, , \qquad E_2=\G(X_2)
\end{equation}
for some $X_1,X_2\subseteq\Omega$.
\item[(b)] Two observables $\A_1$ and $\A_2$, with outcome sets $\Omega_1$ and $\Omega_2$, are compatible if and only if they are jointly measurable, i.e.~there exists an observable $\G$ on $\Omega_1\times\Omega_2$ such that
\begin{equation}\label{eq:jointobs}
\G(X\times\Omega_2) = \A_1(X) \, , \qquad \G(\Omega_1\times Y) = \A_2(Y) 
\end{equation}
for all $X\subseteq\Omega_1, Y\subseteq\Omega_2$.
\end{itemize}
\end{proposition}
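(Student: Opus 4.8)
The plan is to observe that, for effects and observables, being part of an instrument $\I$ depends only on the map $x\mapsto\I\hh(x,\id)$, which I will call the \emph{associated observable} $\G_\I$ of $\I$. First I would check that $\G_\I$ really is an observable on the same outcome set $\Omega$: each $\I\hh(x,\cdot)$ is an operation, so $\G_\I(x)=\I\hh(x,\id)$ is an effect, and since $\I\hh(\Omega,\cdot)$ is a channel we get $\sum_{x\in\Omega}\G_\I(x)=\I\hh(\Omega,\id)=\id$. This single remark is what links the instrument-based notion of compatibility to the observable-based notions of coexistence and joint measurability.

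For part (a), the forward implication follows by passing to the associated observable: if an instrument $\I$ has both $E_1$ and $E_2$ as parts, say $E_i=\I\hh(X_i,\id)$, then $\G_\I(X_i)=\sum_{x\in X_i}\I\hh(x,\id)=\I\hh(X_i,\id)=E_i$, which is exactly coexistence with $\G=\G_\I$. For the converse I would use the trivial instrument from Example~\ref{ex:partofinstrument}: given a common observable $\G$ on $\Omega$ with $E_i=\G(X_i)$, I fix any state $\varrho_0$ and set $\I\hh(x,T)=\tr{\varrho_0 T}\,\G(x)$; then $\I\hh(X_i,\id)=\G(X_i)=E_i$, so both effects are parts of $\I$ and hence compatible.

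Part (b) runs in parallel, but now some bookkeeping with pointer functions is needed. In the forward direction, compatibility supplies an instrument $\I$ on $\Omega$ together with pointer functions $f_i:\Omega\to\Omega_i$ satisfying $\A_i(X)=\I\hh(f_i^{-1}(X),\id)$. I would then form the product pointer $f=(f_1,f_2):\Omega\to\Omega_1\times\Omega_2$ and define $\G(Z)=\G_\I(f^{-1}(Z))$ for $Z\subseteq\Omega_1\times\Omega_2$; since $f^{-1}(X\times\Omega_2)=f_1^{-1}(X)$ and $f^{-1}(\Omega_1\times Y)=f_2^{-1}(Y)$, the two marginals of $\G$ reproduce $\A_1$ and $\A_2$, giving joint measurability. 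Conversely, from a joint observable $\G$ on $\Omega_1\times\Omega_2$ I would again build the trivial instrument $\I\hh((a,b),T)=\tr{\varrho_0 T}\,\G(a,b)$ on the outcome set $\Omega_1\times\Omega_2$ and take the coordinate projections as pointer functions; the same marginal identities then show that $\A_1$ and $\A_2$ are both parts of $\I$.

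The hard part, such as it is, is conceptual rather than technical: one must recognize that the quantum-output component of an instrument plays no role for classical-output devices, so that compatibility collapses onto the associated observable $\G_\I$. Once this reduction is isolated, the converse directions are handed to us verbatim by the explicit constructions of Example~\ref{ex:partofinstrument}, and the forward directions reduce to the assignment $\I\mapsto\G_\I$ together with the set identity $f^{-1}(X\cup Y)=f^{-1}(X)\cup f^{-1}(Y)$ already recorded after Definition~\ref{def:part-of-i-2}.
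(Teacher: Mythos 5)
Your proof is correct and follows the same overall strategy as the paper's: the reduction to the ``associated observable'' $x\mapsto\I\hh(x,\id)$ is exactly the paper's assignment $\G(X):=\I\hh(X,\id)$, and both converse directions use the same trivial instrument $\I\hh(X,T)=\tr{\varrho_0 T}\G(X)$ from Example~\ref{ex:partofinstrument}. The one genuine divergence is in the forward direction of (b): where the paper observes that $\A_1$ and $\A_2$ are both functions of $\G(\cdot)=\I\hh(\cdot,\id)$ and then cites Theorem~3.1 of Lahti--Pulmannov\'a to convert ``functions of a common observable'' into joint measurability, you instead build the joint observable explicitly, pushing $\G_\I$ forward along the product pointer $f=(f_1,f_2)$ and using $f^{-1}(X\times\Omega_2)=f_1^{-1}(X)$ and $f^{-1}(\Omega_1\times Y)=f_2^{-1}(Y)$ to verify the marginal conditions. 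Your version is self-contained and arguably preferable in this finite-outcome setting, since the cited equivalence is nontrivial only for general outcome spaces; the paper's citation buys brevity and covers the general $\sigma$-algebra case for free. Both are complete proofs.
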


\begin{proof}
\begin{itemize}

\item[(a)] Suppose that $E_1$ and $E_2$ are coexistent, hence there exists an observable $\G$ with an outcome set $\Omega$ that satisfies \eqref{eq:coex}.
We fix a state $\varrho_0$ and define an instrument $\I$ by
\begin{equation}\label{eq:coex-ins}
\I\hh(X,\cdot)=\tr{\varrho_0\, \cdot\,} \G(X) \, , \quad X\subseteq\Omega \, .
\end{equation}
Then $\I\hh(X,\id)=\G(X)$ and therefore both $E_1$ and $E_2$ are parts of $\I$, hence compatible.

Suppose then that $E_1$ and $E_2$ are compatible, hence there exists an instrument $\I$ such that $\I\hh(X_1,\id)=E_1$ and $\I\hh(X_2,\id)=E_2$ for some $X_1,X_2\subseteq\Omega$.
We set $\G(X):=\I\hh(X,\id)$, and then $\G(X_1)=E_1$ and $\G(X_2)=E_2$.
This means that $E_1$ and $E_2$ are coexistent.

\item[(b)] This proof is very similar to the proof of (a). 
Suppose that $\A_1$ and $\A_2$ are jointly measurable.
By definition, there exists an observable $\G$ that satisfies \eqref{eq:jointobs}.
We use this $\G$ to define an instrument $\I$ by \eqref{eq:coex-ins}.
Then both $\A_1$ and $\A_2$ are parts of $\I$, hence compatible.

Suppose then that $\A_1$ and $\A_2$ are compatible.
By definition, there exists an instrument $\I$ such that both $\A_1$ and $\A_2$ are parts of $\I$.
The observable $\G(\cdot)=\I\hh(\cdot,\id)$ 
gives both $\A_1$ and $\A_2$ as its functions. 
By Theorem 3.1 in \cite{LaPu97} this is equivalent to joint measurability of $\A_1$ and $\A_2$.
\end{itemize}
\end{proof}

In addition to generalizing the usual concepts of joint measurability of observables and coexistence of effects, Definition \ref{def:comp} gives a way to speak about compatibility between two devices of different types.
First we make some simple observations.

\begin{proposition}\label{prop:simple}
If an operation $\Phi$ is compatible with a device $\dev$, then the effect $\Phi\hh(\id)$ is compatible with $\dev$.
\end{proposition}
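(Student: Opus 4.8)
The plan is to observe that no new instrument needs to be constructed: the very instrument witnessing the compatibility of $\Phi$ and $\dev$ already exhibits the effect $\Phi\hh(\id)$ as one of its parts. The whole argument amounts to unwinding the relevant definitions in the right order.

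First I would invoke Definition~\ref{def:comp}: since $\Phi$ and $\dev$ are compatible, there exists an instrument $\I$, say with outcome set $\Omega$, having both $\Phi$ and $\dev$ as its parts. Next I would use the clause of Definition~\ref{def:part-of-i-1} governing operations: because the operation $\Phi$ is part of $\I$, there is a set $X\subseteq\Omega$ such that $\Phi(\cdot)=\I(X,\cdot)$. Reading this in the Heisenberg picture and evaluating at the identity $\id_\hik$ yields
\begin{equation*}
\Phi\hh(\id) = \I\hh(X,\id).
\end{equation*}
This is precisely the condition~\eqref{eq:effect} in Definition~\ref{def:part-of-i-1} certifying that the effect $\Phi\hh(\id)$ is a part of $\I$, with the \emph{same} set $X$ serving as the witness.

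The final step is immediate: the single instrument $\I$ now has both the effect $\Phi\hh(\id)$ and the device $\dev$ as its parts, so by Definition~\ref{def:comp} these two are compatible, which is the claim. There is no real obstacle here; the only point worth flagging is the recognition that the effect $\Phi\hh(\id)$ is not an auxiliary object requiring its own construction, but is read off directly from the same index set $X$ that already displays $\Phi$ as a part of $\I$. Consequently the proof is essentially a two-line consequence of the definitions, and I would not expect any technical complication to arise.
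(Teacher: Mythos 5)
Your proof is correct and follows exactly the paper's own argument: the instrument $\I$ witnessing the compatibility of $\Phi$ and $\dev$ yields $\Phi\hh(\cdot)=\I\hh(X,\cdot)$ for some $X\subseteq\Omega$, and evaluating at $\id$ shows the effect $\Phi\hh(\id)=\I\hh(X,\id)$ is also part of $\I$. No discrepancies.
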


\begin{proof}
If an operation $\Phi$ is part of an instrument $\I$, then $\Phi^{\hh}(\cdot)=\I^{\hh}(X,\cdot)$ for some set $X\subseteq\Omega$.
Then also the effect $\I^{\hh}(X,\id)=\Phi^{\hh}(\id)$ is part of $\I$.
\end{proof}

The reverse of the implication in Prop.~\ref{prop:simple} is not valid as the constraints on the compatibility of two operations are typically more strict as the constraints on the related effects. 
This difference has been demonstrated in \cite{HeReStZi09} where it was shown that two operations $\varrho\mapsto A^{1/2}\varrho A^{1/2}$ and $\varrho\mapsto B^{1/2}\varrho B^{1/2}$, where $A$ and $B$ are effects, are compatible either if $A$ is a multiple of $B$ or if $A+B\leq\id$. 
However, the compatibility of two effects is obviously not restricted just to these relations.
For instance, two commuting effects are always compatible \cite{SEO83}.

The physical explanation of the fact that operations are ``not so easily'' compatible as the related effects is that operations give a more detailed description than effects. 
Since we are asking whether two mathematical descriptions can correspond to the same device, it is more likely that two coarser descriptions have this property than two finer descriptions.

The correct reverse of Prop.~\ref{prop:simple} is the following.

\begin{proposition}\label{prop:simple-2}
If an effect $E$ is compatible with a device $\dev$, then there exists an operation $\Phi$ such that $E=\Phi\hh(\id)$ and $\Phi$ is compatible with $\dev$.
\end{proposition}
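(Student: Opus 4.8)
The plan is to produce the required operation directly from the instrument that witnesses the compatibility of $E$ and $\dev$. By Definition~\ref{def:comp}, there is an instrument $\I$, with some outcome set $\Omega$, that has both $E$ and $\dev$ as its parts. Since $E$ is part of $\I$, Definition~\ref{def:part-of-i-1} supplies a subset $X\subseteq\Omega$ with $E=\I\hh(X,\id)$. The natural candidate is then $\Phi(\cdot):=\I(X,\cdot)$, i.e.\ the map obtained by reading off the same outcome cell $X$ of $\I$ but retaining the full quantum output instead of only its effect $\I\hh(X,\id)$.

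First I would check that this $\Phi$ is genuinely an operation. In the Heisenberg picture $\Phi\hh=\sum_{x\in X}\I\hh(x,\cdot)$ is a finite sum of operations, hence normal and completely positive, these properties being preserved under finite sums. For the normalization bound, I would use that $\I\hh(\Omega,\cdot)$ is a channel and that each $\I\hh(x,\id)$ is positive, so that $\Phi\hh(\id)=\I\hh(X,\id)\leq\I\hh(\Omega,\id)=\id$. Thus $\Phi\hh(\id)\leq\id$, confirming that $\Phi$ is an operation, and by construction $\Phi\hh(\id)=\I\hh(X,\id)=E$, which is the first required property.

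It then remains only to observe that $\Phi$ is compatible with $\dev$. Indeed, $\Phi$ is part of $\I$ by the operation clause of Definition~\ref{def:part-of-i-1}, witnessed by the very same set $X$, while $\dev$ is part of $\I$ by the choice of $\I$. Hence the single instrument $\I$ has both $\Phi$ and $\dev$ among its parts, which is exactly the definition of compatibility.

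I do not expect any serious obstacle here, since the argument is essentially a matter of unpacking definitions and reusing the witnessing instrument. The only point requiring genuine care is the verification that the partial sum $\I(X,\cdot)$ over the outcome cell $X$ is a bona fide operation, namely complete positivity of the map together with the bound $\Phi\hh(\id)\leq\id$, and both of these follow immediately from the defining properties of an instrument, so the proof should be short.
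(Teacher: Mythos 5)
Your proposal is correct and follows exactly the paper's own argument: take the witnessing instrument $\I$ with $E=\I\hh(X,\id)$ and define $\Phi\hh(\cdot):=\I\hh(X,\cdot)$, which is then part of $\I$ and hence compatible with $\dev$. The extra verification that $\I(X,\cdot)$ is an operation is implicit in the paper's definition of an instrument, so your proof is just a more explicit version of the same one-line argument.
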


\begin{proof}
If an effect $E$ is part of an instrument $\I$, then $E=\I^{\hh}(X,\id)$ for some set $X\subseteq\Omega$.
Then also the operation $\Phi$, defined as $\Phi\hh(\cdot):=\I^{\hh}(X,\cdot)$ is part of $\I$.
\end{proof}

Intuitively, one can always join two ``disjoint'' descriptions into a total description since there cannot be a conflict between them.
In mathematical terms, this leads to the following statements.

\begin{proposition}\label{prop:sufficient}
The following conditions are sufficient for compatibility:
\begin{itemize}
\item[(a)] Two effects $E_1$ and $E_2$ are compatible if $E_1+E_2 \leq \id$.
\item[(b)] An operation $\Phi$ and an effect $E$ are compatible if $\Phi\hh(\id) + E\leq \id$.
\item[(c)] Two operations $\Phi_1$ and $\Phi_2$ are compatible if $\Phi_1\hh(\id)+\Phi_2\hh(\id) \leq \id$.
\end{itemize}
\end{proposition}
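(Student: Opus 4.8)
The plan is to prove each of the three parts directly, by exhibiting a single instrument $\I$ that has the two devices in question as its parts; the constructions mirror those of Example~\ref{ex:partofinstrument}. In each case I would fix an arbitrary state $\varrho_0$ and work in the Heisenberg picture with the three-outcome set $\Omega=\{1,2,3\}$, assigning one device to outcome $1$, the other to outcome $2$, and using outcome $3$ as a ``remainder'' that absorbs whatever is left so that the total map becomes a channel. Part~(a) in fact needs no new work: if $E_1+E_2\leq\id$, then the assignment $\G(\{1\})=E_1$, $\G(\{2\})=E_2$, $\G(\{3\})=\id-E_1-E_2$ defines an observable $\G$, so $E_1$ and $E_2$ are coexistent and hence compatible by the earlier characterization of coexistent effects.

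The basic building block for the remaining two parts is the observation (already used in Example~\ref{ex:partofinstrument}) that for any effect $F$ the map $T\mapsto\tr{\varrho_0 T}\,F$ is an operation: it is normal and completely positive, being the product of the state $T\mapsto\tr{\varrho_0 T}$ with the positive operator $F$, and it sends $\id$ to $F\leq\id$. Using this, for part~(b) I would set $\I\hh(1,\cdot)=\Phi\hh(\cdot)$, $\I\hh(2,T)=\tr{\varrho_0 T}\,E$, and $\I\hh(3,T)=\tr{\varrho_0 T}\,(\id-\Phi\hh(\id)-E)$; then $\Phi(\cdot)=\I\hh(\{1\},\cdot)$ and $E=\I\hh(\{2\},\id)$, so both are parts of $\I$. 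For part~(c) I would analogously put $\I\hh(1,\cdot)=\Phi_1\hh(\cdot)$, $\I\hh(2,\cdot)=\Phi_2\hh(\cdot)$, and $\I\hh(3,T)=\tr{\varrho_0 T}\,(\id-\Phi_1\hh(\id)-\Phi_2\hh(\id))$, reading off $\Phi_1$ and $\Phi_2$ as the parts attached to outcomes $1$ and $2$.

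What remains is to confirm that these candidates are genuine instruments, and this is where the hypotheses are consumed rather than in any deep step. Each slice $\I\hh(x,\cdot)$ must be an operation: for the slices attached to the given devices this is immediate, while for the outcome-$3$ slice it reduces, via the building block above, to checking that the remainder operator is an effect, i.e.\ that $0\leq\id-\Phi\hh(\id)-E$ (respectively $0\leq\id-\Phi_1\hh(\id)-\Phi_2\hh(\id)$). This is exactly the stated sufficient condition, and it also gives the upper bound $\leq\id$ since the subtracted operators are positive. Finally the total map telescopes to $\I\hh(\Omega,\id)=\id$, so $\I\hh(\Omega,\cdot)$ is a channel and $\I$ is an instrument. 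I expect no real obstacle here: the only nontrivial content is that the ``disjointness'' hypothesis is precisely what makes the remainder a legitimate effect, so that the leftover description can be packaged into a single extra outcome.
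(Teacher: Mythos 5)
Your proof is correct and takes essentially the same route as the paper: the paper proves (c) with exactly your ternary instrument (outcomes for $\Phi_1$, $\Phi_2$, and a remainder slice $T\mapsto\tr{\xi T}(\id-\Phi_1\hh(\id)-\Phi_2\hh(\id))$, whose positivity is precisely the hypothesis) and then deduces (a) and (b) from (c) together with Proposition~\ref{prop:simple}, whereas you instantiate the same construction directly in each case. The difference is only organizational, not mathematical.
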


\begin{proof}
The points (a) and (b) follow from (c) and Proposition \ref{prop:simple}.
To prove (c), suppose that $\Phi_1\hh(\id)+\Phi_2\hh(\id) \leq \id$.
We fix a state $\xi$ and define a ternary instrument $\I$ by 
\begin{align*}
\I\hh(1,\cdot) & = \Phi_1\hh(\cdot) \, , \quad
\I\hh(2,\cdot) = \Phi_2\hh(\cdot) \, , \\
\I\hh(3,\cdot) & =  \tr{\xi \cdot} (\id- \Phi_1\hh(\id)-\Phi_2\hh(\id)) \, .
\end{align*}
Both $\Phi_1$ and $\Phi_2$ are parts of this instrument, hence they are compatible.
\end{proof}

It is a well known fact that two effects $E,F$ are compatible (i.e.~coexistent) if they commute, and that a projection $P$ is compatible with an effect $E$ if and only if they commute; see e.g.~p.~120 in \cite{SEO83} or \cite{BuHe08}.
Proposition~\ref{prop:channelsharpeffect} is an analogous result for a projection and an operation.
(The ``only if'' part of (b) can also be inferred from Theorem 3 in \cite{Ozawa01pra}.)

\begin{proposition}\label{prop:channelsharpeffect}
Let $\Phi\hh:\lk\to\lh$ be an operation.
\begin{itemize}
\item[(a)] An effect $E\in\lh$ is compatible with $\Phi$ if 
\begin{eqnarray*}
[\Phi\hh(T),E]=0   \qquad \forall T\in \lk \, .
\end{eqnarray*} 
\item[(b)] A projection $P\in\lh$ is compatible with $\Phi$ if and only if
\begin{eqnarray*}
[\Phi\hh(T),P]=0   \qquad \forall T\in \lk \, .
\end{eqnarray*}
\end{itemize} 
\end{proposition}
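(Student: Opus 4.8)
The plan is to handle the two parts separately, observing first that the ``if'' direction of (b) is an immediate consequence of (a): a projection $P$ is in particular an effect, so $[\Phi\hh(T),P]=0$ for all $T$ already yields compatibility by (a). The two substantive tasks are therefore the sufficiency construction in (a) and the ``only if'' direction of (b).

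For (a) I would prove compatibility by exhibiting a single instrument having both $\Phi$ and $E$ as parts. The guiding idea is that, since $E$ commutes with every $\Phi\hh(T)$, it commutes with $\Phi\hh(\id)$ and with $E^{1/2}$, so one can ``insert'' $E^{1/2}$ into $\Phi$ without destroying complete positivity. Fixing a state $\xi$, I label the outcomes by pairs $(s,t)\in\{0,1\}^2$ (with $s$ tracking the $\Phi$-branch and $t$ the $E$-branch) and set
\begin{align*}
\I\hh((1,1),\cdot) &= E^{1/2}\Phi\hh(\cdot)E^{1/2}, \\
\I\hh((1,0),\cdot) &= (\id-E)^{1/2}\Phi\hh(\cdot)(\id-E)^{1/2},\\
\I\hh((0,1),\cdot) &= \tr{\xi\,\cdot}\,E^{1/2}(\id-\Phi\hh(\id))E^{1/2},\\
\I\hh((0,0),\cdot) &= \tr{\xi\,\cdot}\,(\id-E)^{1/2}(\id-\Phi\hh(\id))(\id-E)^{1/2}.
\end{align*}
Each map is completely positive (either a compression of $\Phi$ or a positive functional times a positive operator), and I would check that the four effects sum to $\id$, so that $\I$ is a genuine instrument. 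Summing over $s=1$ returns $\Phi\hh(\cdot)$, so $\Phi$ is a part; summing over $t=1$ returns, after using $[\Phi\hh(\id),E]=0$ to rewrite $E^{1/2}\Phi\hh(\id)E^{1/2}=E\Phi\hh(\id)$, the effect $E\Phi\hh(\id)+E(\id-\Phi\hh(\id))=E$, so $E$ is a part. Hence $\Phi$ and $E$ are compatible.

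For the ``only if'' direction of (b), suppose $P$ and $\Phi$ are both parts of an instrument $\I$, say $\Phi\hh(\cdot)=\I\hh(X,\cdot)$ and $P=\I\hh(Y,\id)$. The key step is a lemma: if $\Psi$ is an operation with $\Psi\hh(\id)\le Q$ for a projection $Q$, then $\Psi\hh(T)=Q\Psi\hh(T)Q$ for all $T\in\lk$, and in particular $[\Psi\hh(T),Q]=0$. I would prove this from a Kraus decomposition $\Psi\hh(T)=\sum_i A_i^*TA_i$: the bound $\sum_i A_i^*A_i=\Psi\hh(\id)\le Q$ forces $A_i^*A_i\le Q$, whence $\no{A_i(\id-Q)\psi}^2=\ip{(\id-Q)\psi}{A_i^*A_i(\id-Q)\psi}\le 0$ and thus $A_i=A_iQ$; substituting gives $\Psi\hh(T)=Q\Psi\hh(T)Q$. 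Now, since $\I\hh(Y,\id)=P$ and $\I\hh(Y^c,\id)=\id-P$, each single-outcome effect is dominated by its block sum, so $\I\hh(x,\id)\le P$ for $x\in Y$ and $\I\hh(x,\id)\le\id-P$ for $x\notin Y$. Applying the lemma outcome by outcome (with $Q=P$ or $Q=\id-P$), every $\I\hh(x,\cdot)$ commutes with $P$. Therefore $\Phi\hh(T)=\I\hh(X,T)=\sum_{x\in X}\I\hh(x,T)$ commutes with $P$ as well, which is the claim.

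The main obstacle I anticipate is the lemma underlying (b), compounded by the fact that the Kraus (or Stinespring) representation is only formally set up in the next section. If one prefers to remain self-contained here, the same conclusion follows from $2$-positivity of $\Psi\hh$: applying $\Psi\hh$ entrywise to the positive block $\left(\begin{smallmatrix}\id & a\\ a^* & a^*a\end{smallmatrix}\right)$ shows $\left(\begin{smallmatrix}\Psi\hh(\id) & \Psi\hh(a)\\ \Psi\hh(a)^* & \Psi\hh(a^*a)\end{smallmatrix}\right)\ge 0$, and since $\Psi\hh(\id)\le Q$ annihilates $\ran(\id-Q)$, positivity of this block forces $(\id-Q)\Psi\hh(a)=0$; the adjoint statement gives $\Psi\hh(a)(\id-Q)=0$, so $\Psi\hh(a)=Q\Psi\hh(a)Q$ commutes with $Q$. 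The only other point demanding care is verifying in (a) that the four pieces are legitimate operations and that the hypothesis $[\Phi\hh(T),E]=0$ (through $[\Phi\hh(\id),E^{1/2}]=0$) is exactly what makes the $t=1$ effects collapse to $E$; the remaining manipulations are routine.
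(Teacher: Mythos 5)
Your proof is correct, and in both parts it deviates from the paper's in ways worth noting. In part (a) the underlying idea is the same L\"uders-type splitting $\Phi\hh(T)=\sqrt{E}\Phi\hh(T)\sqrt{E}+\sqrt{\id-E}\Phi\hh(T)\sqrt{\id-E}$, but the paper uses a three-outcome instrument whose third branch is $\tr{\eta T}(\id-\Phi\hh(\id))$ and then asserts $\I\hh(0,\id)=E$; in fact $\I\hh(0,\id)=\sqrt{E}\Phi\hh(\id)\sqrt{E}$, which equals $E$ only when $\Phi\hh(\id)=\id$. Your four-outcome version, which also splits the complementary branch by $E^{1/2}$ and $(\id-E)^{1/2}$, recovers $E$ exactly for an arbitrary (non-unital) operation, so it actually repairs a small gap in the published argument. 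In part (b) you prove the ``only if'' direction via the lemma that $\Psi\hh(\id)\leq Q$ forces $\Psi\hh(T)=Q\Psi\hh(T)Q$, established through a Kraus decomposition or $2$-positivity; the paper instead first reduces to effect-valued $T$ using its Lemma 1 (every operator is a linear combination of four effects), splits the index set as $Y\cap X$ and $Y\cap X^c$, and then invokes only the operator-level fact that a positive operator dominated by a projection commutes with it. Your route is slightly heavier machinery (and leans on representations only formally introduced in the next section, as you note), but it is a standard and correct strengthening, and your outcome-by-outcome application is equivalent to the paper's two-block argument. No gaps.
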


\begin{proof}
\begin{itemize}
\item[(a)]
Suppose $[\Phi\hh(T),E]=0$ for all $T\in\lk$.
This implies that $[\Phi\hh(T),\sqrt{E}]=0$ and $[\Phi\hh(T),\sqrt{\id-E}]=0$ for all $T\in\lk$.
Then $\Phi\hh$ can be written as 
\begin{eqnarray*}
\Phi\hh(T) &=& E \Phi\hh(T)+ (\id-E)\Phi\hh(T) \\
&=& \sqrt{E} \sqrt{E} \Phi\hh(T)+ \sqrt{\id-E}\sqrt{\id-E}\Phi\hh(T) \\
&=& \sqrt{E}\Phi\hh(T)\sqrt{E}+\sqrt{\id-E}\Phi\hh(T) \sqrt{\id-E} \, .
\end{eqnarray*} 
We fix a state $\eta$ and set 
\begin{align*}
& \I\hh(0,T):=\sqrt{E}\Phi\hh(T)\sqrt{E} \\ 
& \I\hh(1,T):=\sqrt{\id-E}\Phi\hh(T) \sqrt{\id-E} \\
& \I\hh(2,T):=\tr{\eta T} (\id - \Phi\hh(\id)) \, .
\end{align*}
Then $\I$ is a ternary instrument with $\I\hh(\{0,1\},\cdot)=\Phi\hh(\cdot)$ and 
$\I\hh(0,\id)=E$.
Hence, $\Phi$ and $E$ are compatible.

\item[(b)] We need to prove the ``only if'' part of the statement. 
Suppose that $P$ and $\Phi$ are compatible. Thus there exists an instrument $\I$ satisfying $\I\hh(X,\id)=P$
and $\I(Y,\cdot)=\Phi(\cdot)$ for some $X,Y\subseteq\Omega$. 
We make a counter assumption that there exists $T\in\lk$ such that
\begin{equation}
\label{eq:noncommute}
[\Phi\hh(T), P]\neq 0 \, .
\end{equation}
By Lemma \ref{lemma:span} this $T$ can be assumed to be an effect. 
We write $Y$ as a disjoint union $Y=Y_1 \cup Y_2$, where $Y_1=Y\cap X$ and $Y_2=Y\cap (\Omega\setminus X)$.
From \eqref{eq:noncommute} follows that
\begin{align}
\textrm{\ } \qquad & [\I\hh(Y_1,T), P]\neq 0
\label{eq:00} \\
\textrm{or} \qquad & [\I\hh(Y_2,T),P]\neq 0 \, .
\label{eq:01}
\end{align}
Since $Y_1\subseteq X$ and $0\leq T\leq \id$, we have 
\begin{equation*}
0\leq \I\hh(Y_1,T)\leq\I\hh(X,T)\leq \I\hh(X,\id)=P \, .
\end{equation*}
A positive operator below a projection commutes with that projection, thus \eqref{eq:00} cannot hold.
In a similar way we obtain
\begin{equation*}
0\leq \I\hh(Y_2,T)\leq\I\hh(\Omega\setminus X,T)\leq \I\hh(\Omega\setminus X,\id)=\id-P \, .
\end{equation*}
Thus $\I\hh(Y_2,T)$ commutes with $\id-P$, hence also with $P$ and \eqref{eq:01} cannot hold.
Therefore, $\Phi$ and $P$ must be incompatible if \eqref{eq:noncommute} holds.
\end{itemize}
\end{proof}

The preceding result leads to the following observation. 

\begin{proposition}\label{prop:channel-trivial}
If a channel $\Lambda$ is compatible with all projections, then $\Lambda$ is a contraction channel,
i.e.~$\Lambda\hh(T)=\tr{\eta T}\id$ for some fixed state $\eta$. 
\end{proposition}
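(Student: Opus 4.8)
The plan is to feed the compatibility hypothesis into Proposition~\ref{prop:channelsharpeffect}(b) and then exploit that scalars are the only operators commuting with every projection. First I would note that a channel is in particular an operation, so Proposition~\ref{prop:channelsharpeffect}(b) applies to $\Lambda$ and to each projection $P\in\lh$: since $\Lambda$ is compatible with $P$, the ``only if'' direction yields $[\Lambda\hh(T),P]=0$ for all $T\in\lk$. Letting $P$ run over all projections on $\hi$, this says that, for each fixed $T$, the operator $\Lambda\hh(T)$ commutes with every projection in $\lh$.

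Next I would upgrade ``commutes with every projection'' to ``is a scalar''. It is enough to test against the rank-one projections $\kb{\psi}{\psi}$ attached to unit vectors $\psi\in\hi$. From $[\Lambda\hh(T),\kb{\psi}{\psi}]=0$, applying both sides to $\psi$ gives $\Lambda\hh(T)\psi=\ip{\psi}{\Lambda\hh(T)\psi}\,\psi$, so every unit vector is an eigenvector of $\Lambda\hh(T)$. A standard argument (comparing the eigenvalues on $\psi$, $\phi$, and $\psi+\phi$) then forces $\Lambda\hh(T)=\lambda(T)\,\id$ for a single scalar $\lambda(T)\in\complex$. This holds for every $T$ and thereby defines a functional $\lambda:\lk\to\complex$.

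Finally I would identify $\lambda$ as a normal state and read off $\eta$. Linearity of $\lambda$ is inherited from $\Lambda\hh$; since $\Lambda\hh$ is (completely) positive, $T\geq 0$ gives $\lambda(T)\,\id\geq 0$ and hence $\lambda(T)\geq 0$; the channel normalization $\Lambda\hh(\id_\hik)=\id_\hi$ gives $\lambda(\id_\hik)=1$; and normality of $\Lambda\hh$ makes $\lambda$ normal. A normal state on $\lk$ is represented by a density operator, so there is a state $\eta$ on $\hik$ with $\lambda(T)=\tr{\eta T}$, and therefore $\Lambda\hh(T)=\tr{\eta T}\,\id$, as required.

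I do not expect a serious obstacle. The load-bearing step is the implication ``commutes with all projections $\Rightarrow$ scalar'', which is clean in both the finite- and infinite-dimensional settings once one restricts to rank-one projections. The only point deserving explicit care is the appeal to normality of $\Lambda\hh$, which is precisely what guarantees that the positive normalized functional $\lambda$ is given by an honest density operator $\eta$ rather than being a non-normal state.
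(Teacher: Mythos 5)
Your proposal is correct and follows essentially the same route as the paper: invoke Proposition~\ref{prop:channelsharpeffect}(b) to get $[\Lambda\hh(T),P]=0$ for every projection $P$, conclude $\Lambda\hh(T)=c(T)\id$, and then identify the resulting unital normal positive linear functional $c$ with a state $\eta$ via the trace formula. You merely spell out two steps the paper leaves implicit (the rank-one-projection argument that the commutant of all projections consists of scalars, and the role of normality in obtaining a density operator), which is fine.
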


\begin{proof}
Suppose $\Lambda$ is compatible with all projections and let $T\in\lh$.
By Prop.~\ref{prop:channelsharpeffect} we have $[\Lambda\hh(T),P]=0$ for all projections $P$, hence $\Lambda\hh(T)=c(T) \id$ for some number $c(T)\geq 0$.
Because $\Lambda\hh(\cdot)$ is a unital normal positive linear map, $c(\cdot)$ is a 
unital normal positive linear functional. 
Hence, $c(\cdot)$ is identified by some state $\eta$ on $\mathcal{L(H)}$ via the trace formula, $c(T)=\tr{\eta T}$.
\end{proof}

A device $\dev$ is called \emph{trivial} if it is compatible with all devices of the same kind.
A paradigmatic example is a coin tossing observable --- this gives a random outcome irrespective of the input state.
We can obviously toss a coin simultaneously with another measurement, and hence they must be compatible.
In the following we demonstrate the compatibility relation by recalling more about trivial devices.

\begin{example}(\emph{Trivial devices})\label{ex:trivial-devices}
An effect $E$ is trivial if and only if $E=e\id$ for some number $0\leq e\leq 1$.
This is a simple consequence of the fact that an effect and a projection are coexistent if and only if they commute.
It also follows that an observable $\A$ is trivial if and only if $\A(x)$ is a trivial effect for each $x$.
The only trivial operation is the null operation $\varrho\mapsto 0$ as shown in \cite{HeReStZi09}.
It follows that there are no trivial channels nor trivial instruments.

The general definition of compatibility allows us to consider also devices that are compatible with all devices of some different type.

Any contraction channel $\varrho\mapsto\eta$, where $\eta$ is a fixed state, is compatible with every observable.
Namely, if $\A$ is an observable, we define an instrument by 
\begin{equation}
\I\hh(x,T)= \tr{\eta T} \A(x) \, .
\end{equation}
The contraction channels are the only channels with the property that they are compatible with every observable. 
This is the result of Prop.~\ref{prop:channel-trivial}.

Any trivial observable is compatible with every channel.
Namely, if $x\mapsto p(x)\id$ is a trivial observable (here $p$ is a fixed probability distribution) and $\Lambda$ is a channel, we define an instrument by formula
\begin{equation}
\I\hh(x,\cdot)= p(x) \Lambda\hh(\cdot) \, .
\end{equation}
The trivial observables are the only observables that are compatible with every channel.
This follows from the fact that any non-trivial observable disturbs some state \cite{Busch09}, which means that the identity channel that preserves all input states cannot be compatible with this observable.

The preceding two statements are information-disturbance counterparts in the compatibility language.
While the latter one states that there is no information without disturbance, the former one shows that complete information means complete destruction.
\end{example}

\subsection{Strong incompatibility}\label{sec:strong}

If two devices $\dev_1$ and $\dev_2$ are incompatible, there is no single instrument that would give both $\dev_1$ and $\dev_2$ as its parts. 
Of course, we can always separately implement $\dev_1$ and $\dev_2$ with two different instruments.
We can then ask whether these two instruments need to be completely different or whether they have some similarity. 
This motivates the following definitions.

\begin{definition}
Two devices $\dev_1$ and $\dev_2$ are \emph{weakly compatible} if there exist two instruments $\I_1$ and $\I_2$ such that $\dev_1$ is part of $\I_1$ and $\dev_2$ is part of $\I_2$, and that $\I_1(\Omega,\cdot)=\I_2(\Omega,\cdot)$.
Otherwise we say that $\dev_1$ and $\dev_2$ are \emph{strongly incompatible}.
\end{definition}

Clearly, compatible devices $\dev_1$ and $\dev_2$ are weakly compatible.
Or in other words, strongly incompatible devices are incompatible.
In some case strong incompatibility can be either equivalent to incompatibility or impossible.
For this we have the following simple observations.

\begin{proposition}\label{prop:elementary}
\begin{itemize}
\item[(a)] A channel $\Lambda$ is strongly incompatible with another device $\dev$ if and only if they are incompatible.
\item[(b)] All pairs of observables/effects are weakly compatible.
\end{itemize}
\end{proposition}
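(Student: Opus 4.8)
The plan is to exploit the fact, noted just before the statement, that compatible devices are automatically weakly compatible (take $\I_1=\I_2$), so strong incompatibility always implies incompatibility. Consequently, in (a) only the reverse implication requires work, and it is cleanest to prove its contrapositive: if a channel $\Lambda$ and a device $\dev$ are weakly compatible, then they are in fact compatible.

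For (a), I would start from two witnessing instruments $\I_1,\I_2$ with $\Lambda$ part of $\I_1$, $\dev$ part of $\I_2$, and matching total channels $\I_1(\Omega,\cdot)=\I_2(\Omega,\cdot)$. The key leverage is Proposition~\ref{prop:channel-part-instrument}: since $\Lambda$ is a \emph{channel} that is part of $\I_1$, it must coincide with the total channel of $\I_1$, i.e.\ $\Lambda(\cdot)=\I_1(\Omega,\cdot)$. Combining this with weak compatibility gives $\Lambda(\cdot)=\I_2(\Omega,\cdot)$, so by the trivial direction of Proposition~\ref{prop:channel-part-instrument} the channel $\Lambda$ is already part of $\I_2$ (with $X=\Omega$). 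Since $\dev$ is part of $\I_2$ by hypothesis, the single instrument $\I_2$ exhibits both devices as parts, establishing compatibility. Taking contrapositives yields incompatible $\Rightarrow$ strongly incompatible, and together with the elementary implication above this gives the claimed equivalence.

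For (b), the idea is that observables and effects constrain only the classical branch of an instrument, while the total quantum channel may be chosen freely; so I would produce two instruments whose total channels coincide by design. Concretely, I would invoke the canonical constructions from Example~\ref{ex:partofinstrument}, but with a single fixed ancilla state $\varrho_0$ used for \emph{both} devices. For an effect $E$ the construction gives total channel $T\mapsto\tr{\varrho_0 T}\id$, and for an observable $\A$ the construction $\I\hh(x,T)=\tr{\varrho_0 T}\A(x)$ gives the same total channel $T\mapsto\tr{\varrho_0 T}\sum_x\A(x)=\tr{\varrho_0 T}\id$, using $\sum_x\A(x)=\id$. Hence for any pair drawn from observables and effects, the two witnessing instruments share the contraction total channel $T\mapsto\tr{\varrho_0 T}\id$, so the pair is weakly compatible.

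I expect the only subtle point to be the bookkeeping in (a): one must apply Proposition~\ref{prop:channel-part-instrument} in \emph{both} directions --- first to force the channel to equal the total channel of its own witnessing instrument, then to recognize it as a part of the other instrument --- and one must check that the two instruments need not share an outcome set, only the value of the total channel $\I(\Omega,\cdot)$ as a map, which they do. Part (b) should then be essentially immediate once the shared-$\varrho_0$ observation is made; its real content is that the freedom in the quantum output lets any two classical-type devices be aligned on a common total channel.
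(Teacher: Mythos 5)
Your proposal is correct and follows essentially the same route as the paper: part (a) is exactly the intended application of Proposition~\ref{prop:channel-part-instrument} (the paper states only ``follows from'' it, and your contrapositive argument is the natural filling-in of that one-liner), and part (b) uses the same contraction-type instruments $\I\hh(x,T)=\tr{\varrho_0 T}\A(x)$ with a common fixed state so that both total channels equal $T\mapsto\tr{\varrho_0 T}\id$. No gaps.
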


\begin{proof}
\begin{itemize}
\item[(a)] Follows from Prop.~\ref{prop:channel-part-instrument}.
\item[(b)] Let $\A_1$ and $\A_2$ be two observables.
We fix a state $\eta$ and define instruments $\I_1$ and $\I_2$ by
\begin{equation*}
\I_j\hh(x,\cdot)=\tr{\eta \cdot} \A_j(x) \, .
\end{equation*}
Then $\I_1\hh(\Omega,\cdot)=\I_2\hh(\Omega,\cdot)=\tr{\eta \cdot} \id $.
Hence, $\A_1$ and $\A_2$ are weakly compatible.
The weak compatibility in the other two cases (effect-effect and effect-observable) can be proved in a similar way.
\end{itemize}
\end{proof}

We also have analogous statements as in Prop.~\ref{prop:simple} and Prop.~\ref{prop:simple-2}.

\begin{proposition}\label{prop:simple-3}
\begin{itemize}
\item[(a)] If an operation $\Phi$ is weakly compatible with device $\dev$, then the effect $\Phi\hh(\id)$ is weakly compatible with $\dev$.
\item[(b)] If an effect $E$ is weakly compatible with a device $\dev$, then there exists an operation $\Phi$ such that $E=\Phi\hh(\id)$ and $\Phi$ is weakly compatible with $\dev$.
\end{itemize}
\end{proposition}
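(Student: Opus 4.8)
The plan is to mirror the proofs of Proposition \ref{prop:simple} and Proposition \ref{prop:simple-2}, observing that the extra clause imposed by weak compatibility --- that the two witnessing instruments share the same total channel --- is never disturbed by the constructions used there. The key idea is that both the promotion of an operation to its associated effect and the extension of an effect to an operation take place \emph{inside a single fixed instrument}, so we may keep the witnessing pair of instruments unchanged and simply reread what they certify.

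For part (a), I would start from the assumption that $\Phi$ is weakly compatible with $\dev$ and unpack the definition to obtain instruments $\I_1,\I_2$ with $\Phi$ part of $\I_1$, with $\dev$ part of $\I_2$, and with $\I_1(\Omega,\cdot)=\I_2(\Omega,\cdot)$. Since $\Phi$ is part of $\I_1$, there is a set $X\subseteq\Omega$ with $\Phi\hh(\cdot)=\I_1\hh(X,\cdot)$; evaluating at $\id$ exhibits $\Phi\hh(\id)=\I_1\hh(X,\id)$ as an effect that is part of $\I_1$. The same pair $(\I_1,\I_2)$ then certifies weak compatibility of $\Phi\hh(\id)$ with $\dev$: the effect lies in $\I_1$, the device $\dev$ lies in $\I_2$, and the total-channel equality is inherited verbatim. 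For part (b) I would run the argument in the opposite direction: assuming $E$ weakly compatible with $\dev$, fix witnessing instruments $\I_1,\I_2$ with $E$ part of $\I_1$, pick $X$ with $E=\I_1\hh(X,\id)$, and define $\Phi\hh(\cdot):=\I_1\hh(X,\cdot)$. This is a legitimate operation because each $\I_1\hh(X,\cdot)$ is an operation, it satisfies $\Phi\hh(\id)=E$, and it is part of $\I_1$, so again $(\I_1,\I_2)$ witnesses weak compatibility of $\Phi$ with $\dev$.

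I do not expect a genuine obstacle here. The only point requiring a moment of care is whether keeping the witnessing instruments fixed truly respects the total-channel condition of weak compatibility; but since neither construction modifies $\I_1(\Omega,\cdot)$ or $\I_2(\Omega,\cdot)$, the equality $\I_1(\Omega,\cdot)=\I_2(\Omega,\cdot)$ survives unchanged. Thus the proposition is a direct transfer of Propositions \ref{prop:simple} and \ref{prop:simple-2} to the weakly-compatible setting, with the shared-total-channel clause coming along for free.
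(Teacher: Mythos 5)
Your proof is correct and is precisely the argument the paper intends: the paper's own proof is just the one-line remark that the argument mirrors Propositions \ref{prop:simple} and \ref{prop:simple-2}, and your filled-in version --- keeping the witnessing pair $(\I_1,\I_2)$ fixed and rereading what it certifies, noting that the total-channel equality is untouched --- is exactly that transfer. Nothing further is needed.
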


\begin{proof}
The proofs are similar to the proofs of Prop.~\ref{prop:simple} and Prop.~\ref{prop:simple-2}.
\end{proof}

For two operations $\Phi_1$ and $\Phi_2$, we denote $\Phi_1\leq\Phi_2$ if there exists an operation $\Phi'$ such that $\Phi_1 + \Phi'= \Phi_2$.
This relation is a partial order in the set of operations, and has been studied e.g.~in \cite{Arveson69}.
Clearly, to see whether $\Phi_1\leq\Phi_2$ holds, we only need to check if the mapping $\Phi_2-\Phi_1$ is completely positive.

Let us remark that even if $\Phi_1$ and $\Phi_2$ are completely positive, their difference $\Phi_2-\Phi_1$ can be positive without being completely positive.
For instance, the linear maps $\Phi_1,\Phi_2:\mathcal{T}(\complex^2)\to\mathcal{T}(\complex^2)$, defined by
\begin{equation}
\Phi\ss_1(\varrho) = \tr{\varrho}\id/3 \, , \quad  \Phi\ss_2(\varrho)= (\tr{\varrho}\id + \varrho^T )/3  \, , 
\end{equation}
are operations (here $\cdot^T$ is the transposition in some fixed bases).
The difference of $\Phi_2-\Phi_1$ is a multiple of the transposition map, thus positive but not completely positive.

If $\Phi_1$ and $\Phi_2$ are comparable (i.e.~$\Phi_1\leq\Phi_2$ or $\Phi_2\leq\Phi_1$), then they are compatible. 
We can see this by defining an instrument $\I$ with the outcome set $\Omega=\{1,2,3\}$ as follows (assuming that $\Phi_1\leq\Phi_2$):
\begin{align*}
& \I\hh(1,\cdot)=\Phi\hh_1(\cdot) \, ,\quad \I\hh(2,\cdot)=(\Phi\hh_2-\Phi\hh_1)(\cdot) \, , \\
& \I\hh(3,\cdot)=\tr{\cdot\xi} (\id-\Phi_2\hh(\id)) \, , 
\end{align*}
where $\xi$ is some fixed state.

An operation $\Phi$ is called \emph{pure} if it can be written in the form $\Phi(\cdot) = W\cdot W^\ast$ for some bounded operator $W$.
We recall from Prop.~4 in \cite{HeReStZi09} that two pure operations $\Phi_1$ and $\Phi_2$ are compatible if and only if they are comparable or their sum $\Phi_1+\Phi_2$ is an operation.
Unlike for pure operations, generally compatibility of two operations cannot be expressed as a simple condition using the above partial order. 
These conditions are only sufficient, not necessary.

Weak compatibility has a clear characterization in terms of this partial order.
Namely, the weak compatibility of two operations reduces to the requirement that the operations have a common upper bound.

\begin{proposition}\label{prop:channel_for_wc}
Let $\Phi_1$ and $\Phi_2$ be two operations.
The following conditions are equivalent:
\begin{itemize}
\item[(i)]  $\Phi_1$ and $\Phi_2$ are weakly compatible.
\item[(ii)] There exists a channel $\Lambda$ such that $\Phi_1\leq\Lambda$ and $\Phi_2\leq\Lambda$.
\item[(iii)] There exists an operation $\Phi$ such that $\Phi_1\leq\Phi$ and $\Phi_2\leq\Phi$.
\end{itemize}
\end{proposition}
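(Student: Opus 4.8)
The plan is to establish the cycle (i) $\Rightarrow$ (ii) $\Rightarrow$ (iii) $\Rightarrow$ (i). The middle step is immediate, since a channel is in particular an operation, so any common channel upper bound witnessing (ii) also witnesses (iii).

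For (i) $\Rightarrow$ (ii), I would begin from two instruments $\I_1,\I_2$ realizing weak compatibility: $\Phi_j$ is part of $\I_j$ and the total operations coincide, $\I_1\hh(\Omega,\cdot)=\I_2\hh(\Omega,\cdot)=:\Lambda$. By the definition of an instrument this shared total operation $\Lambda$ is a channel. Since $\Phi_j$ is part of $\I_j$, there is a set $X_j$ with $\Phi_j\hh(\cdot)=\I_j\hh(X_j,\cdot)$; the complementary piece $\I_j\hh(\Omega\setminus X_j,\cdot)$ is then an operation (a finite sum of operations bounded above by the channel $\Lambda$), and $\Phi_j\hh(\cdot)+\I_j\hh(\Omega\setminus X_j,\cdot)=\Lambda\hh(\cdot)$. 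Hence $\Phi_1\leq\Lambda$ and $\Phi_2\leq\Lambda$ for the \emph{same} channel $\Lambda$, which is precisely (ii). A small point to watch here is that the two witnessing instruments may a priori carry different outcome sets; what matters is only that their total channels agree.

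The substantive direction is (iii) $\Rightarrow$ (i). Given an operation $\Phi$ with $\Phi_1\leq\Phi$ and $\Phi_2\leq\Phi$, I write $\Phi=\Phi_1+\Psi_1=\Phi_2+\Psi_2$ for operations $\Psi_1,\Psi_2$. I would first complete $\Phi$ to a channel in the standard fashion: fixing a state $\xi$ and putting $\Theta\hh(\cdot):=\tr{\xi\,\cdot}\,(\id-\Phi\hh(\id))$, the map $\Lambda:=\Phi+\Theta$ is a channel. Then I define two instruments on the common outcome set $\{1,2,3\}$ by
\begin{align*}
& \I_1\hh(1,\cdot)=\Phi_1\hh(\cdot) \, , \quad \I_1\hh(2,\cdot)=\Psi_1\hh(\cdot) \, , \quad \I_1\hh(3,\cdot)=\Theta\hh(\cdot) \, , \\
& \I_2\hh(1,\cdot)=\Phi_2\hh(\cdot) \, , \quad \I_2\hh(2,\cdot)=\Psi_2\hh(\cdot) \, , \quad \I_2\hh(3,\cdot)=\Theta\hh(\cdot) \, .
\end{align*}
Each $\I_j$ is a bona fide instrument, its three operation-valued components summing to the channel $\Lambda$, and $\Phi_j$ is part of $\I_j$ via the singleton $\{1\}$. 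By construction the total channels coincide, $\I_1\hh(\Omega,\cdot)=\I_2\hh(\Omega,\cdot)=\Lambda$, so $\Phi_1$ and $\Phi_2$ are weakly compatible.

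I do not anticipate a genuine obstacle: the construction is a direct adaptation of the comparability argument appearing just before the proposition, and the definition of weak compatibility is tailored to exactly this kind of common-total-channel decomposition. The only care needed is the routine bookkeeping of reusing a single garbage component $\Theta$ in both instruments so that their total operations are literally the same channel.
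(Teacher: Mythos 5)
Your proof is correct and follows essentially the same route as the paper's: the cycle (i)$\Rightarrow$(ii)$\Rightarrow$(iii)$\Rightarrow$(i), with the complement-of-$X_j$ argument for (i)$\Rightarrow$(ii) and the three-outcome instruments sharing a common garbage component $\tr{\xi\,\cdot}(\id-\Phi\hh(\id))$ for (iii)$\Rightarrow$(i). No substantive differences.
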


\begin{proof}
(i)$\Rightarrow$(ii): Assuming (i), there exist two instruments $\I_1$ 
on $\Omega_1$ and $\I_2$ on $\Omega_2$ such that 
$\I_1(X_1,\cdot)=\Phi_1(\cdot)$ for some $X_1\subseteq \Omega_1$, $\I_2(X_2,\cdot)=\Phi_2(\cdot)$ for some  $X_2 \subseteq \Omega_2$, and $\I_1(\Omega_1,\cdot) =\I_2(\Omega_2,\cdot)\equiv\Lambda(\cdot)$, where $\Lambda$ is a channel. 
It is now clear, that
\[
\Phi_a(\cdot)=\I_a(X_a,\cdot)\leq\I_a(X_a,\cdot)+\I_a(\Omega_a\setminus X_a,\cdot)=\I_a(\Omega_a,\cdot)=\Lambda(\cdot)
\]
for $a=1,2$. 

(ii)$\Rightarrow$(iii): Every channel is an operation, hence (ii) implies (iii) trivially.

(iii)$\Rightarrow$(i): We have $\Phi_1\leq\Phi$ and $\Phi_2\leq\Phi$, which means that $\bar{\Phi}_1:=\Phi-\Phi_1$  and $\bar{\Phi}_2:=\Phi-\Phi_2$ are operations. 
We define an instrument $\I_1$ with the outcome set $\Omega=\{1,2,3\}$ as follows:
\begin{equation*}
\I\hh_1(1,\cdot)=\Phi\hh_1(\cdot) \, ,\quad \I\hh_1(2,\cdot)=\bar{\Phi}\hh_1(\cdot) \, , \quad \I\hh_1(3,\cdot)=\tr{\,\cdot\,\xi} (\id-\Phi\hh(\id)) \, , 
\end{equation*}
where $\xi$ is some fixed state.
In a similar way we define an instrument $\I_2$ related to the operation $\Phi_2$,
\begin{equation*}
\I\hh_2(1,\cdot)=\Phi\hh_2(\cdot) \, ,\quad \I\hh_2(2,\cdot)=\bar{\Phi}\hh_2(\cdot) \, , \quad \I\hh_2(3,\cdot)=\tr{\,\cdot\,\xi} (\id-\Phi\hh(\id)) \, . 
\end{equation*}
Since $\I_1(\Omega,\cdot)=\I_2(\Omega,\cdot)$ we conclude that $\Phi_1$ and $\Phi_2$ are weakly compatible.
\end{proof}

The following statement follows immediately. 

\begin{proposition}\label{prop:order-channel-operation}
A channel $\Lambda$ is weakly compatible (hence also compatible) with an operation $\Phi$ if and only if $\Phi\leq \Lambda$ holds.  
\end{proposition}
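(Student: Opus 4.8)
The plan is to read this statement off from Proposition~\ref{prop:channel_for_wc}, exploiting the fact that a channel is already an operation. Since $\Lambda$ is a channel and hence an operation, I would apply Proposition~\ref{prop:channel_for_wc} to the pair of operations $\Lambda$ and $\Phi$: they are weakly compatible if and only if they admit a common upper bound, i.e.\ there is an operation $\Psi$ with $\Lambda\leq\Psi$ and $\Phi\leq\Psi$. So the whole statement reduces to showing that such a common upper bound exists precisely when $\Phi\leq\Lambda$.

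For the reverse implication this is immediate: if $\Phi\leq\Lambda$, then $\Lambda$ itself serves as the common upper bound, since $\Lambda\leq\Lambda$ holds trivially (the null operation $\nul$ witnesses $\Lambda+\nul=\Lambda$) and $\Phi\leq\Lambda$ is the hypothesis. Hence Proposition~\ref{prop:channel_for_wc}(iii) applies and $\Lambda,\Phi$ are weakly compatible.

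The substantive direction is the forward one, and the key observation I would isolate is that a channel is \emph{maximal} in the order $\leq$: if $\Lambda\leq\Psi$ for a channel $\Lambda$ and an operation $\Psi$, then $\Psi=\Lambda$. To see this, note that $\Psi-\Lambda$ is by definition an operation, so in the Heisenberg picture $(\Psi-\Lambda)\hh(\id)\geq 0$; on the other hand $\Psi\hh(\id)\leq\id=\Lambda\hh(\id)$, forcing $(\Psi-\Lambda)\hh(\id)=0$. A completely positive map whose value at $\id$ vanishes must be the zero map, for any effect $T$ gives $0\leq(\Psi-\Lambda)\hh(T)\leq(\Psi-\Lambda)\hh(\id)=0$, and Lemma~\ref{lemma:span} extends this to all of $\lk$. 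Thus $\Psi=\Lambda$. Applying this to a common upper bound $\Psi$ of $\Lambda$ and $\Phi$ yields $\Psi=\Lambda$, whence $\Phi\leq\Psi=\Lambda$, as required.

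Finally, the parenthetical ``hence also compatible'' is immediate from Proposition~\ref{prop:elementary}(a), which states that for a channel weak compatibility and compatibility coincide. The only (mild) obstacle is the maximality observation above; everything else is bookkeeping with Proposition~\ref{prop:channel_for_wc}.
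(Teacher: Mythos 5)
Your proof is correct and follows exactly the route the paper intends: the paper states this proposition ``follows immediately'' from Proposition~\ref{prop:channel_for_wc}, and your argument is precisely that deduction, with the one implicit step --- the maximality of channels in the order $\leq$, i.e.\ that $\Lambda\leq\Psi$ forces $\Psi=\Lambda$ via $(\Psi-\Lambda)\hh(\id)=0$ and Lemma~\ref{lemma:span} --- spelled out correctly.
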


For some operations $\Phi$ of specific type, it is easy to write down explicitly all channels $\Lambda$ satisfying $\Phi\leq\Lambda$.
In the next proof and also later we will use the fact that if $\Phi$ is an operation such that $\Phi\hh(\id)$ is a rank-1 operator, then $\Phi$ is  of the form $\Phi\ss(\cdot)=\tr{\,\cdot\,\Phi\hh(\id)} \xi$ for some state $\xi$; see Prop.~8 in \cite{HeWo10}.

\begin{proposition}\label{prop:order-rank-1}
Let $\Phi$ be an operation such that $\id-\Phi\hh(\id)$ is a rank-1 operator.
Then a channel $\Lambda$ satisfies $\Phi\leq\Lambda$ if and only if 
\begin{equation}\label{eq:order-rank-1}
\Lambda\ss(\cdot) = \Phi\ss(\cdot) + (1-\tr{\Phi\ss(\cdot)}) \xi
\end{equation}
for some state $\xi$.
\end{proposition}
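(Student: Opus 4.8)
The plan is to base everything on the \emph{complementary operation} $\Psi:=\Lambda-\Phi$, and then to invoke the rank-one structure result cited just before the statement (Prop.~8 in \cite{HeWo10}). First I would record the elementary reformulation: for a channel $\Lambda$, the relation $\Phi\leq\Lambda$ holds precisely when $\Psi:=\Lambda-\Phi$ is an operation. Applying the Heisenberg effect to this identity and using $\Lambda\hh(\id)=\id$ gives $\Psi\hh(\id)=\id-\Phi\hh(\id)$, which is exactly the rank-one operator appearing in the hypothesis. Thus the whole problem reduces to describing those operations $\Psi$ whose associated effect $\Psi\hh(\id)$ equals a fixed rank-one positive operator.

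For the ``only if'' direction I would argue as follows. Assuming $\Phi\leq\Lambda$, the map $\Psi$ is an operation with $\Psi\hh(\id)=\id-\Phi\hh(\id)$ of rank one. By the cited result, any operation whose Heisenberg effect is rank one must be of measure-and-prepare form, so there is a state $\xi$ with $\Psi\ss(\varrho)=\tr{\varrho\,\Psi\hh(\id)}\,\xi=\tr{\varrho(\id-\Phi\hh(\id))}\,\xi$. Using the Heisenberg--Schr\"odinger duality $\tr{\Phi\ss(\varrho)}=\tr{\varrho\,\Phi\hh(\id)}$ together with $\tr{\varrho}=1$ for states, this simplifies to $\Psi\ss(\varrho)=(1-\tr{\Phi\ss(\varrho)})\,\xi$. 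Adding $\Phi\ss$ to both sides and recalling $\Lambda\ss=\Phi\ss+\Psi\ss$ then yields the claimed formula \eqref{eq:order-rank-1}.

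For the ``if'' direction I would run the computation backwards. Given the formula, define $\Psi\ss(\varrho):=(\tr{\varrho}-\tr{\Phi\ss(\varrho)})\,\xi=\tr{\varrho(\id-\Phi\hh(\id))}\,\xi$, which agrees with $(1-\tr{\Phi\ss(\cdot)})\xi$ on states. Since $\Phi\hh(\id)\leq\id$, the operator $\id-\Phi\hh(\id)$ is positive, and any map $\varrho\mapsto\tr{\varrho A}\,\xi$ with $A\geq 0$ and $\xi$ a state is completely positive; moreover its Heisenberg effect is $\Psi\hh(\id)=\id-\Phi\hh(\id)\leq\id$, so $\Psi$ is a genuine operation. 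Consequently $\Phi+\Psi$ is completely positive with $(\Phi+\Psi)\hh(\id)=\id$, i.e.\ a channel, and it coincides with $\Lambda$ by construction; hence $\Phi\leq\Lambda$.

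The argument is mostly bookkeeping once the cited rank-one characterization is in hand, so I do not expect a serious obstacle. The two points that require care are (i) verifying that $\Psi$ is \emph{completely} positive and not merely positive --- this is where the measure-and-prepare form is essential, since such maps are automatically completely positive --- and (ii) keeping the picture conversions straight, in particular that restricting the trace-class identity to states (trace one) is precisely what produces the factor $1-\tr{\Phi\ss(\cdot)}$ in \eqref{eq:order-rank-1}.
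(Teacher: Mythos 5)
Your proposal is correct and follows essentially the same route as the paper: both decompose $\Lambda=\Phi+\Phi'$ (your $\Psi$), observe that ${\Phi'}^H(\id)=\id-\Phi^H(\id)$ is rank one, and invoke Prop.~8 of \cite{HeWo10} to force the measure-and-prepare form $\tr{\varrho(\id-\Phi^H(\id))}\,\xi$. Your ``if'' direction merely spells out the complete-positivity check that the paper dismisses as clear.
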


\begin{proof}
Clearly, if \eqref{eq:order-rank-1} holds, then $\Phi\leq\Lambda$.  

Suppose then that a channel $\Lambda$ satisfies $\Phi\leq\Lambda$, i.e.~there exists an operation $\Phi'$ such that $\Phi+\Phi'=\Lambda$.
In particular, $\Phi\hh(\id) + {\Phi'}\hh(\id)=\Lambda\hh(\id)=\id$.
We denote $E\equiv\id-\Phi\hh(\id)={\Phi'}\hh(\id)$.
Since $E$ is a rank-1 operator, the operation $\Phi'$ is of the form
\begin{equation}
{\Phi'}\ss(\varrho) =\tr{\varrho E} \xi
\end{equation}
for some state $\xi$.
Inserting this into $\Lambda=\Phi+\Phi'$ we conclude \eqref{eq:order-rank-1}.
\end{proof}

\section{Mathematical formulations of incompatibility}\label{sec:formulations}

In this section we formulate the incompatibility relations using first the Stinespring representation and then Kraus operators. 

\subsection{Incompatibility in terms of Stinespring dilation}\label{sec:appendix}

Let us begin with the well-known standard Stinespring representation theorem. 
(See for e.g.~\cite{CBMOA03}.)

\begin{theorem}(Stinespring representation)\label{thm:stine} 
Let $\Phi\hh: \mathcal{L(K)} \to \mathcal{L(H)}$ be 
an operation. There exist a Hilbert space $\hik'$ and 
an operator $V:\mathcal{H} \to \mathcal{K}\otimes \hik'$ 
satisfying 
\begin{eqnarray*}
\Phi\hh(T)=V^* (T\otimes \id)V
\end{eqnarray*}
for all $T\in \mathcal{L(K)}$. The doublet $(\hik', V)$ is called 
the Stinespring representation of $\Phi$. 
It holds that $\Vert \Phi\Vert =\Vert \Phi\hh(\id)\Vert =
\Vert V\Vert^2$. 
In addition, 
if 
$(\mathcal{L(K)} \otimes \id) V \mathcal{H}$ is dense in 
$\mathcal{K}\otimes \hik'$, the representation is called 
minimal. The minimal representation exists and is determined uniquely 
up to unitary operations on $\hik'$. That is, 
if $(\hik'',V')$ is another minimal Stinespring representation, there exists 
a unitary operator $U:\hik' \to \hik''$ satisfying 
$V'=(\id \otimes U)V$. 
\end{theorem}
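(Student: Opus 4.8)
The plan is to run the classical GNS-style dilation construction and then use the normality of $\Phi\hh$ to pass to the amplified form $T\mapsto T\otimes\id$. First I would take the algebraic tensor product $\lk\otimes\hi$ and equip it with the sesquilinear form determined on elementary tensors by
\[
\ip{A\otimes\xi}{B\otimes\eta}_\Phi := \ip{\xi}{\Phi\hh(A^\ast B)\,\eta}\,.
\]
Using the operator algebra $\lk$ (rather than $\hik$) as the left factor is the crucial choice: the adjoint $A^\ast$ supplies exactly the conjugation that makes this form well defined and sesquilinear, and positive semidefiniteness becomes \emph{literally} the statement that $\Phi\hh$ is completely positive, since for any finite family the quadratic form equals $\sum_{k,l}\ip{\xi_k}{\Phi\hh(A_k^\ast A_l)\xi_l}\geq 0$. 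Quotienting by the null space $\{u:\ip{u}{u}_\Phi=0\}$ and completing yields a Hilbert space $\widetilde{\hik}$.

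Next I would define $\pi(T)[A\otimes\xi]:=[TA\otimes\xi]$ and $V\xi:=[\id\otimes\xi]$, and check (routinely) that $\pi$ is a well-defined normal $\ast$-representation of $\lk$ and that $V$ is bounded with
\[
\ip{V\xi}{\pi(T)V\eta}=\ip{\xi}{\Phi\hh(T)\eta}\,,\qquad\text{i.e.}\qquad \Phi\hh(T)=V^\ast\pi(T)V\,.
\]
To reach the form demanded in the statement I would then invoke the structure theory of normal representations of the type-I factor $\lk$: any such representation is unitarily equivalent to an amplification $T\mapsto T\otimes\id_{\hik'}$ on $\hik\otimes\hik'$ for a suitable multiplicity space $\hik'$. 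Transporting $V$ along this equivalence produces $V:\hi\to\hik\otimes\hik'$ with $\Phi\hh(T)=V^\ast(T\otimes\id)V$. The norm identities then follow routinely: $\no{\Phi}=\no{\Phi\hh(\id)}$ holds for any positive map out of a unital algebra, and $\Phi\hh(\id)=V^\ast V$ gives $\no{\Phi\hh(\id)}=\no{V^\ast V}=\no{V}^2$.

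Minimality requires no extra work: the vectors $[A\otimes\xi]=\pi(A)V\xi$ span a dense subspace by construction, so the span of $(\lk\otimes\id)V\hi$ is dense and the representation is already minimal. For uniqueness I would take a second minimal representation $(\hik'',V')$ and define $W$ on the dense set of vectors $(T\otimes\id)V\xi$ by $W\,(T\otimes\id)V\xi:=(T\otimes\id)V'\xi$. Because both representations reproduce $\Phi\hh$, one has $\ip{(S\otimes\id)V\xi}{(T\otimes\id)V\eta}=\ip{\xi}{\Phi\hh(S^\ast T)\eta}$ and the same for the primed data, so $W$ is a well-defined isometry with dense domain and dense range and extends to a unitary $W:\hik\otimes\hik'\to\hik\otimes\hik''$ that intertwines the two amplifications.

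The one step carrying real content — and the part I expect to be the crux — is upgrading this intertwining unitary to the special form $W=\id_\hik\otimes U$ required by the conclusion $V'=(\id\otimes U)V$. This is a commutant computation: an operator intertwining $T\otimes\id_{\hik'}$ and $T\otimes\id_{\hik''}$ for all $T\in\lk$ must lie in the commutant $\{T\otimes\id:T\in\lk\}'=\id_\hik\otimes\L(\hik')$, hence has the form $\id_\hik\otimes U$, and unitarity of $W$ forces $U:\hik'\to\hik''$ to be unitary. Then $V'=WV$ reads exactly $V'=(\id\otimes U)V$. Everything else is bookkeeping; the genuinely structural inputs are that $\lk$ is a type-I factor (so normal representations amplify) and the identification of the commutant of an amplification.
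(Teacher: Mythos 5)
The paper does not prove this theorem --- it is quoted as a standard result with a reference to Paulsen's book --- so there is no in-paper argument to compare against, and your write-up is the standard GNS/Stinespring construction (positivity of the form from complete positivity, amplification of the normal representation of the type-I factor $\lk$, commutant computation for uniqueness), which is correct in all essentials and uses the normality hypothesis exactly where it is needed. The only point worth tightening is the last step: the intertwiner $W$ maps $\hik\otimes\hik'$ to $\hik\otimes\hik''$, two \emph{different} spaces, so the relevant fact is not literally the commutant identity $\{T\otimes\id\}'=\id_\hik\otimes\mathcal{L}(\hik')$ but its mild variant that every bounded operator intertwining the two amplifications has the form $\id_\hik\otimes U$ for a bounded $U:\hik'\to\hik''$.
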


The ordering between operations 
can be expressed in terms of the Stinespring representation. 
The
following lemma is known as the Radon-Nikodym theorem for completely
positive maps \cite{Raginsky03}.
  
\begin{theorem}(Radon-Nikodym theorem for operations)\label{RN}
Let $\Phi_1\hh: \mathcal{L(K)} \to \mathcal{L(H)}$ be an operation.
We denote its minimal Stinespring representation by 
\begin{equation*}
\Phi_1\hh(T)=V^* (T\otimes \id)V,
\end{equation*}
where $V:\hi \to \hik\otimes \hik'$ is a linear operator. 
Let $\Phi_2\hh:\mathcal{L(K)}\to \mathcal{L(H)}$ be another operation. 
Then $\Phi_2\leq \Phi_1$ holds  
if and only if there exists an effect $E\in \mathcal{L(\hik')}$ such that 
\begin{equation*}
\Phi_2\hh(T)=V^* (T\otimes E)V
\end{equation*}
holds for every $T\in\lk$.
If $E$ exists, then it is unique. 
\end{theorem}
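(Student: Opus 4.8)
The plan is to prove the nontrivial ``only if'' direction by constructing $E$ as the $\hik'$-component of a positive contraction that commutes with the whole representation. The ``if'' direction is immediate: if $\Phi_2\hh(T)=V^*(T\otimes E)V$ for some effect $E$, then $(\Phi_1\hh-\Phi_2\hh)(T)=V^*(T\otimes(\id-E))V$ is manifestly completely positive because $\id-E\geq 0$, so $\Phi_2\leq\Phi_1$.

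For the converse, suppose $\Phi_2\leq\Phi_1$, so that both $\Phi_2\hh$ and $\Psi\hh:=\Phi_1\hh-\Phi_2\hh$ are completely positive. Minimality means that the vectors $(T\otimes\id)V\psi$, with $T\in\lk$ and $\psi\in\hi$, span a dense subspace $\mathcal{D}\subseteq\hik\otimes\hik'$. I would define a sesquilinear form on $\mathcal{D}$ by
\[
\langle (S\otimes\id)V\phi \,|\, D\,(T\otimes\id)V\psi\rangle := \langle \phi \,|\, \Phi_2\hh(S^*T)\psi\rangle ,
\]
and the first task is to show that this rule determines a well-defined bounded positive operator $D$. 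The crucial estimate is that for every finite family $\{(T_j,\psi_j)\}$,
\[
0\leq \sum_{i,j}\langle \psi_i \,|\, \Phi_2\hh(T_i^*T_j)\psi_j\rangle \leq \sum_{i,j}\langle \psi_i \,|\, \Phi_1\hh(T_i^*T_j)\psi_j\rangle = \Big\| \sum_j (T_j\otimes\id)V\psi_j\Big\|^2 ,
\]
where the left inequality is complete positivity of $\Phi_2$, the right inequality is complete positivity of $\Psi$, and the final equality uses $\Phi_1\hh(T_i^*T_j)=V^*(T_i^*\otimes\id)(T_j\otimes\id)V$. This double bound simultaneously shows that the form vanishes on null vectors (well-definedness, via the Cauchy--Schwarz inequality for the positive form) and that $0\leq D\leq\id$, so $D$ extends to an effect on $\hik\otimes\hik'$.

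Next I would check that $D$ commutes with every $T\otimes\id$: using the defining rule, both $\langle (S\otimes\id)V\phi\,|\,D(T\otimes\id)(R\otimes\id)V\psi\rangle$ and $\langle(S\otimes\id)V\phi\,|\,(T\otimes\id)D(R\otimes\id)V\psi\rangle$ evaluate to $\langle\phi\,|\,\Phi_2\hh(S^*TR)\psi\rangle$. Hence $D$ lies in the commutant of $\lk\otimes\id_{\hik'}$, which equals $\id_\hik\otimes\mathcal{L}(\hik')$, so $D=\id_\hik\otimes E$ for some effect $E\in\mathcal{L}(\hik')$. Setting $S=\id$ in the defining relation then yields $\langle\phi\,|\,\Phi_2\hh(T)\psi\rangle=\langle V\phi\,|\,(T\otimes E)V\psi\rangle=\langle\phi\,|\,V^*(T\otimes E)V\psi\rangle$, i.e.\ $\Phi_2\hh(T)=V^*(T\otimes E)V$, as required.

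Uniqueness of $E$ follows from minimality once more: if $V^*(T\otimes(E-E'))V=0$ for all $T$, then $\langle(S\otimes\id)V\phi\,|\,(\id\otimes(E-E'))(T\otimes\id)V\psi\rangle=\langle\phi\,|\,V^*(S^*T\otimes(E-E'))V\psi\rangle=0$ on the dense set $\mathcal{D}$, forcing $\id\otimes(E-E')=0$ and hence $E=E'$. I expect the main obstacle to be the well-definedness-and-boundedness step: it is the only place where both halves of the hypothesis $0\leq\Phi_2\leq\Phi_1$ are genuinely used, and one must pass carefully from the diagonal (quadratic) estimate to the full operator $D$ via Cauchy--Schwarz for the positive semidefinite form, together with a density argument extending $D$ from $\mathcal{D}$ to all of $\hik\otimes\hik'$. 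The identification of the commutant of $\lk\otimes\id$ with $\id\otimes\mathcal{L}(\hik')$ is standard, but I would state it explicitly since it is what converts ``$D$ commutes with the representation'' into the desired tensor form $\id\otimes E$.
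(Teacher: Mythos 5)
Your proof is correct. The paper itself gives no proof of this theorem --- it simply cites Raginsky's work --- and your argument (defining the positive contraction $D$ on the dense span via the sesquilinear form, using the two-sided complete-positivity estimate for well-definedness and boundedness, and identifying the commutant of $\lk\otimes\id$ with $\id\otimes\mathcal{L}(\hik')$) is exactly the standard Arveson-style Radon--Nikodym argument found in that reference.
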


This statement leads to the following observations. 

\begin{proposition}
\label{prop:opopwc}(operation-operation weak compatibility)
Let $\Phi\hh_1$ and $\Phi\hh_2$ be two operations. 
They are weakly compatible if and only if 
there exist a Hilbert space $\hik'$, an isometry $V:\mathcal{H}
\to \hik \otimes \hik'$, and (not necessarily compatible)
 effects $E,F \in \mathcal{L(K')}$ 
satisfying 
\begin{eqnarray*}
\Phi\hh_1(T) &=& V^*(T\otimes E)V ,\\
\Phi\hh_2(T) &=& V^*(T\otimes F)V.
\end{eqnarray*}
\end{proposition}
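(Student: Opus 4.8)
The plan is to characterize weak compatibility of $\Phi_1$ and $\Phi_2$ through Proposition~\ref{prop:channel_for_wc}, which reduces weak compatibility to the existence of a common upper bound, and then to feed that common upper bound into the Radon–Nikodym theorem (Theorem~\ref{RN}) to manufacture the two effects $E$ and $F$. The key observation is that a \emph{channel} $\Lambda$ dominating both $\Phi_1$ and $\Phi_2$ has a minimal Stinespring representation by an \emph{isometry}, since $\Lambda\hh(\id)=\id$ forces $\no{V}^2=\no{\Lambda\hh(\id)}=1$ and, more precisely, $V^*V=\id$. This isometry is exactly the $V$ demanded in the statement, and the effects $E,F$ will be the Radon–Nikodym derivatives of $\Phi_1,\Phi_2$ with respect to $\Lambda$.

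For the forward direction, suppose $\Phi_1$ and $\Phi_2$ are weakly compatible. By the equivalence (i)$\Leftrightarrow$(ii) in Proposition~\ref{prop:channel_for_wc} there is a channel $\Lambda$ with $\Phi_1\leq\Lambda$ and $\Phi_2\leq\Lambda$. I would take the minimal Stinespring representation $\Lambda\hh(T)=V^*(T\otimes\id)V$ with $V:\hi\to\hik\otimes\hik'$; since $\Lambda$ is a channel, $V$ is an isometry. Applying Theorem~\ref{RN} twice, once to the pair $(\Phi_1,\Lambda)$ and once to $(\Phi_2,\Lambda)$, yields unique effects $E,F\in\mathcal{L(K')}$ with $\Phi_1\hh(T)=V^*(T\otimes E)V$ and $\Phi_2\hh(T)=V^*(T\otimes F)V$, which is precisely the claimed form. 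Note that no compatibility of $E$ and $F$ is asserted or needed, consistent with the statement.

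For the converse, suppose such $V,E,F$ exist. I would define the channel $\Lambda\hh(T):=V^*(T\otimes\id)V$; because $V$ is an isometry, $\Lambda\hh(\id)=V^*V=\id$, so $\Lambda$ is genuinely a channel. Since $0\leq E\leq\id$ and $0\leq F\leq\id$, the differences $\Lambda\hh(T)-\Phi_1\hh(T)=V^*(T\otimes(\id-E))V$ and $\Lambda\hh(T)-\Phi_2\hh(T)=V^*(T\otimes(\id-F))V$ are again of Stinespring form with positive second factor, hence completely positive maps. This gives $\Phi_1\leq\Lambda$ and $\Phi_2\leq\Lambda$, so condition (ii) of Proposition~\ref{prop:channel_for_wc} holds and $\Phi_1,\Phi_2$ are weakly compatible.

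I expect the only delicate point to be the interface between the \emph{minimality} hypothesis of the Radon–Nikodym theorem and the Stinespring data supplied by the common channel $\Lambda$. Theorem~\ref{RN} requires the representation of the dominating operation to be minimal; the representation coming from $\Lambda$ may not be minimal a priori, so in the forward direction one should either invoke the existence of a minimal representation for $\Lambda$ directly or cut down $V$ to the subspace $\overline{(\lk\otimes\id)V\hi}$ before applying the theorem. In the converse direction the given $V$ need not be minimal either, but minimality is not required there — only that the explicit formulas define completely positive maps — so that direction is essentially routine once the isometry property of $V$ is used to verify $\Lambda\hh(\id)=\id$.
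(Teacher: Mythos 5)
Your proposal is correct and follows essentially the same route as the paper: the ``only if'' direction goes through Proposition~\ref{prop:channel_for_wc} to obtain a common dominating channel and then applies Theorem~\ref{RN} to its minimal Stinespring isometry, while the ``if'' direction defines $\Lambda\hh(T)=V^*(T\otimes\id)V$ and checks $\Phi_1\leq\Lambda$, $\Phi_2\leq\Lambda$. Your explicit remark that one must work with the \emph{minimal} representation of $\Lambda$ before invoking the Radon--Nikodym theorem is a point the paper leaves implicit, but it does not change the argument.
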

\begin{proof}
``If'' part: 
It is easy to see that  $\Lambda\hh:\mathcal{L(K)}
\to \mathcal{L(H)}$ defined by 
$\Lambda\hh(T)=V^*(T\otimes \id)V$ is a channel 
satisfying $\Phi_1 \leq \Lambda$ and $\Phi_2\leq \Lambda$ and invoking linearity of $V$ concludes this part of the proof. 
\\
``Only if'' part: 
By Prop.~\ref{prop:channel_for_wc}, there exists a channel $\Lambda$ satisfying $\Phi_1\leq \Lambda$ and $\Phi_2 \leq \Lambda$. 
Then Theorem \ref{RN} is applied. 
\end{proof}

\begin{proposition}(channel-operation compatibility)
Let $\Lambda\hh: \mathcal{L(K)}\to \mathcal{L(H)}$ be 
a channel with a minimal Stinespring representation
\begin{eqnarray*}
\Lambda\hh(T)=V^*(T\otimes \id)V, 
\end{eqnarray*}
where $V:\mathcal{H}\to \mathcal{K}\otimes \hik'$
is an isometry. 
An operation $\Phi\hh: \mathcal{L(K)}\to \mathcal{L(H)}$ is compatible with $\Lambda\hh$ if and only if 
there exists an effect $E \in \mathcal{L(K')}$ such that 
\begin{equation*}
\Phi\hh(T)=V^* (T\otimes E)V
\end{equation*}
holds for every $T\in\lk$.
If $E$ exists, then it is unique. 
\end{proposition}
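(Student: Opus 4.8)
The plan is to recognize that this proposition is simply the conjunction of two results already established above: the order-theoretic characterization of channel–operation compatibility in Proposition~\ref{prop:order-channel-operation}, together with the Radon–Nikodym theorem for operations, Theorem~\ref{RN}. No new machinery is needed; the whole argument is a matter of chaining these two equivalences.

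First I would invoke Proposition~\ref{prop:order-channel-operation}, which says that a channel $\Lambda$ is compatible with an operation $\Phi$ if and only if $\Phi\leq\Lambda$. This disposes of the compatibility notion entirely and reduces the claim to translating the ordering relation $\Phi\leq\Lambda$ into the asserted Stinespring form. So from this point on I only have to characterize when $\Phi\leq\Lambda$ holds in terms of a sandwiched effect.

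Next I would apply Theorem~\ref{RN} with the identification $\Phi_1=\Lambda$ and $\Phi_2=\Phi$. That theorem takes as input the minimal Stinespring representation of the larger operation $\Phi_1$, and here this is precisely the representation $\Lambda\hh(T)=V^*(T\otimes\id)V$ furnished in the statement; moreover, since $\Lambda$ is a channel we have $V^*V=\Lambda\hh(\id)=\id$, so $V$ is indeed the isometry that the hypothesis requires. Theorem~\ref{RN} then yields that $\Phi\leq\Lambda$ holds if and only if there is an effect $E\in\mathcal{L(K')}$ with $\Phi\hh(T)=V^*(T\otimes E)V$ for all $T\in\lk$, and that such an $E$ is unique whenever it exists. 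Combining this with the reduction from the previous paragraph establishes the stated equivalence together with the uniqueness assertion.

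Because the statement is a direct composition of two prior results, there is no genuinely difficult step. The only point demanding care is the bookkeeping of hypotheses: one must confirm that the ordering in Proposition~\ref{prop:order-channel-operation} places the channel \emph{on top} (so that $\Lambda$, not $\Phi$, plays the role of the dominating operation $\Phi_1$ in Theorem~\ref{RN}), and that the minimal Stinespring representation supplied for $\Lambda$ in the statement is exactly the one Theorem~\ref{RN} presupposes. Once these matchings are verified, the proof is essentially one line in each direction.
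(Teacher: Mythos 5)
Your proposal is correct and matches the paper's own proof exactly: the paper also derives this proposition by combining Prop.~\ref{prop:order-channel-operation} (compatibility of a channel and an operation is equivalent to $\Phi\leq\Lambda$) with Theorem~\ref{RN} applied to the minimal Stinespring representation of $\Lambda$. Your extra care in checking that $\Lambda$ plays the role of the dominating operation and that $V$ is an isometry is sound bookkeeping but introduces nothing beyond the paper's one-line argument.
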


\begin{proof}
This follows from Prop.~\ref{prop:order-channel-operation} and Theorem \ref{RN}. 
\end{proof}

To discuss the compatibility between other combinations of devices, we have to generalize the Radon-Nikodym theorem. 
Since we are assuming that $\Omega$ is finite, the following result easily follows from Theorem \ref{RN}. 

\begin{proposition}\label{e(X)}(instrument-channel compatibility)
Let 
$\Lambda\hh: \mathcal{L(K)} \to \mathcal{L(H)}$ be a channel with a 
minimal Stinespring representation 
\begin{equation*}
\Lambda\hh(T)=V^* (T\otimes \id)V,
\end{equation*}
where $V:\mathcal{H} \to \mathcal{K}\otimes \hik'$ is an isometry. 
An instrument $\I\hh$ defined on $\Omega$ is 
compatible with $\Lambda\hh$ if and only if 
there exists an observable $\A$ on $\mathcal{L(\hik')}$ defined on 
$\Omega$ such that 
\begin{equation*}
\I\hh(x,T)=V^*(T\otimes \A(x))V
\end{equation*}
for every $x\in\Omega$ and $T\in\lk$. 
If $\A$ exists, then it is unique. 
\end{proposition}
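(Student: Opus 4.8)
The plan is to reduce the compatibility of the instrument $\I$ with the channel $\Lambda$ to the single identity $\I\hh(\Omega,\cdot)=\Lambda\hh(\cdot)$, and then to apply the Radon--Nikodym Theorem~\ref{RN} separately to each operation $\I\hh(x,\cdot)$. For the ``only if'' part, I would start from Definition~\ref{def:comp}: compatibility gives a master instrument $\J$ on some outcome set $\Omega''$ having both $\I$ and $\Lambda$ as parts. Since $\Lambda$ is a channel, Proposition~\ref{prop:channel-part-instrument} forces $\Lambda\hh(\cdot)=\J\hh(\Omega'',\cdot)$. Since $\I$ is a part of $\J$, Definition~\ref{def:part-of-i-2} provides a pointer function $f:\Omega''\to\Omega$ with $\I\hh(X,\cdot)=\J\hh(f^{-1}(X),\cdot)$. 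Taking $X=\Omega$ yields $\I\hh(\Omega,\cdot)=\J\hh(\Omega'',\cdot)=\Lambda\hh(\cdot)$, and taking singletons shows $\I\hh(x,\cdot)=\J\hh(f^{-1}(\{x\}),\cdot)\leq\Lambda\hh(\cdot)$, the inequality holding because $\Lambda\hh-\I\hh(x,\cdot)=\J\hh(\Omega''\setminus f^{-1}(\{x\}),\cdot)$ is again an operation.

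Next I would apply Theorem~\ref{RN} with $\Phi_1=\Lambda$, whose minimal Stinespring representation is the given $(\hik',V)$, and $\Phi_2=\I\hh(x,\cdot)$. For each $x$ this produces a unique effect $\A(x)\in\mathcal{L(\hik')}$ with $\I\hh(x,T)=V^*(T\otimes\A(x))V$. It then remains to check that $\A$ is an observable, namely $\sum_{x\in\Omega}\A(x)=\id$; here the finiteness of $\Omega$ ensures this is a finite sum of effects. Summing the representations and using $\I\hh(\Omega,\cdot)=\Lambda\hh(\cdot)$ gives $V^*(T\otimes\sum_x\A(x))V=\Lambda\hh(T)=V^*(T\otimes\id)V$ for all $T$. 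Since the map $E\mapsto V^*(\cdot\otimes E)V$ is injective by minimality of the Stinespring representation (this is the uniqueness clause of Theorem~\ref{RN}), I conclude $\sum_x\A(x)=\id$, so $\A$ is a genuine observable on $\Omega$.

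For the ``if'' part, suppose such an observable $\A$ exists. Then $\I\hh(\Omega,T)=\sum_x V^*(T\otimes\A(x))V=V^*(T\otimes\id)V=\Lambda\hh(T)$, so $\Lambda\hh(\cdot)=\I\hh(\Omega,\cdot)$. By Proposition~\ref{prop:channel-part-instrument} the channel $\Lambda$ is a part of $\I$, while $\I$ is trivially a part of itself via the identity pointer; hence $\I$ itself is a master instrument exhibiting both $\I$ and $\Lambda$ as parts, and they are compatible. Uniqueness of $\A$ is contained in the injectivity already invoked.

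I expect the main obstacle to be bookkeeping rather than deep analysis: one must extract from the definition of compatibility the precise relations $\I\hh(\Omega,\cdot)=\Lambda\hh$ and $\I\hh(x,\cdot)\leq\Lambda$ that render Theorem~\ref{RN} applicable, and then secure the normalization $\sum_x\A(x)=\id$ through injectivity of $E\mapsto V^*(\cdot\otimes E)V$. The injectivity itself, if one prefers not to merely cite minimality, is a two-line argument: if $V^*(T\otimes C)V=0$ for all $T$, then replacing $T$ by $S^*T$ gives $\langle(S\otimes\id)V\psi,(T\otimes C)V\phi\rangle=0$, and density of $(\mathcal{L(K)}\otimes\id)V\hi$ in $\hik\otimes\hik'$ forces $(\id\otimes C)=0$, whence $C=0$.
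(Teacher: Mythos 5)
Your proposal is correct and follows exactly the route the paper intends: the paper gives no written proof, merely asserting that the result ``easily follows from Theorem~\ref{RN}'' for finite $\Omega$, and your argument — reducing compatibility to $\I\hh(\Omega,\cdot)=\Lambda\hh$ and $\I\hh(x,\cdot)\leq\Lambda$ via Proposition~\ref{prop:channel-part-instrument} and the pointer function, applying Theorem~\ref{RN} to each $\I\hh(x,\cdot)$, and securing $\sum_x\A(x)=\id$ by injectivity of $E\mapsto V^*(\cdot\otimes E)V$ on the minimal representation — is precisely the elaboration of that sketch. The explicit density argument for injectivity is a welcome addition that the paper leaves implicit.
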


The result leads to the following characterization of compatible operations. 

\begin{proposition}(operation-operation compatibility)
Let $\Phi\hh_1$ and $\Phi\hh_2$ be operations 
$\mathcal{L(K)}\to \mathcal{L(H)}$. 
They are compatible if and only if 
there exist a Hilbert space $\hik'$, an isometry $V:\mathcal{H}
\to \hik \otimes \hik'$, and compatible effects $E,F \in \mathcal{L(K')}$ 
satisfying 
\begin{eqnarray*}
\Phi\hh_1(T)=V^*(T\otimes E)V\\
\Phi\hh_2(T)=V^*(T\otimes F)V
\end{eqnarray*}
for all $T\in\lk$.
\end{proposition}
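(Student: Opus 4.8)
\textbf{The plan} is to characterize operation-operation compatibility by combining the weak-compatibility structure from Proposition~\ref{prop:opopwc} with the instrument-channel compatibility result of Proposition~\ref{e(X)}. The key observation is that two operations $\Phi_1\hh,\Phi_2\hh$ are compatible (not merely weakly compatible) precisely when they can be embedded as parts of a \emph{single} instrument, and that instrument must itself be compatible with a common dilating channel.

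For the ``only if'' direction, suppose $\Phi_1$ and $\Phi_2$ are compatible. Then there is a single instrument $\I\hh$ on some outcome set $\Omega$ with $\Phi_1\hh(\cdot)=\I\hh(X_1,\cdot)$ and $\Phi_2\hh(\cdot)=\I\hh(X_2,\cdot)$ for some $X_1,X_2\subseteq\Omega$. Let $\Lambda\hh(\cdot):=\I\hh(\Omega,\cdot)$ be the total channel of $\I$, and take its minimal Stinespring representation $\Lambda\hh(T)=V^*(T\otimes\id)V$ with isometry $V:\hi\to\hik\otimes\hik'$. Since $\I$ is an instrument compatible with its own total channel $\Lambda$ (trivially, via the identity pointer function), Proposition~\ref{e(X)} supplies a unique observable $\A$ on $\mathcal{L(\hik')}$ with $\I\hh(x,T)=V^*(T\otimes\A(x))V$ for every $x\in\Omega$. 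Setting $E:=\A(X_1)=\sum_{x\in X_1}\A(x)$ and $F:=\A(X_2)=\sum_{x\in X_2}\A(x)$, both $E$ and $F$ are effects of the form $\A(Y)$, and the additivity of $\A$ gives $\Phi_1\hh(T)=V^*(T\otimes E)V$ and $\Phi_2\hh(T)=V^*(T\otimes F)V$. Crucially, $E$ and $F$ are coexistent effects: they are both values of the single observable $\A$, and by the reduction in part (a) of the coexistence proposition this is exactly compatibility of the effects $E$ and $F$.

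For the ``if'' direction, suppose we are given such $V$, $E$, $F$ with $E,F$ compatible. By the coexistence reduction there is an observable $\A$ on $\mathcal{L(\hik')}$ with outcome set $\Omega$ and subsets $X_1,X_2$ satisfying $E=\A(X_1)$, $F=\A(X_2)$. Define an instrument directly by $\I\hh(x,T):=V^*(T\otimes\A(x))V$; one checks each $\I\hh(x,\cdot)$ is an operation and $\I\hh(\Omega,\cdot)=V^*(\,\cdot\otimes\id)V$ is a channel, so $\I$ is a genuine instrument. Then $\I\hh(X_1,\cdot)=V^*(\,\cdot\otimes E)V=\Phi_1\hh(\cdot)$ and likewise $\I\hh(X_2,\cdot)=\Phi_2\hh(\cdot)$, exhibiting $\Phi_1,\Phi_2$ as parts of the one instrument $\I$, hence compatible.

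\textbf{The main obstacle} I anticipate is verifying that the passage between a \emph{common upper-bounding channel} (which is what Proposition~\ref{prop:opopwc} and the Radon--Nikodym machinery give directly) and a \emph{single shared instrument} is faithful in both directions; specifically, one must ensure that the ``extra'' commonality forced by genuine compatibility over weak compatibility is captured entirely by the coexistence of the two effects $E,F$ inside the common fibre algebra $\mathcal{L(K')}$, rather than by some incompatibility hidden in the choice of dilation. The delicate point is that weak compatibility already yields $V,E,F$ with $E,F$ possibly incompatible (Proposition~\ref{prop:opopwc}); upgrading to compatibility should correspond exactly to making $E$ and $F$ coexistent while keeping the \emph{same} isometry $V$. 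I would argue this by noting that a single instrument forces a single total channel $\Lambda=\I(\Omega,\cdot)$ dominating both operations, so its minimal Stinespring $V$ is common, and then invoking the uniqueness in Proposition~\ref{e(X)} to identify the instrument's effects $\A(x)$ as the values whose coarse-grainings give $E$ and $F$; the coexistence of $E,F$ is then automatic and, conversely, sufficient to rebuild $\I$.
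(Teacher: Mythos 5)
Your proof is correct and follows essentially the same route as the paper: the ``only if'' direction applies the instrument--channel compatibility result (Prop.~\ref{e(X)}) to the instrument and its total channel $\I(\Omega,\cdot)$ in minimal Stinespring form and reads off $E=\A(X_1)$, $F=\A(X_2)$ as coexistent values of the resulting observable, while the ``if'' direction rebuilds the instrument as $\I(x,T)=V^*(T\otimes\A(x))V$ from an observable witnessing the coexistence of $E$ and $F$. No substantive differences from the paper's argument.
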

\begin{proof}
Let us begin with ``if part''. 
Thanks to the compatibility between $E$ and $F$, 
there is an observable $\A$ on $\Omega$ and 
$X_1,X_2\subseteq \Omega$ such that 
$E=\A(X_1)$ and $F=\A(X_2)$ hold. 
Following the proof of ``if'' part of Proposition \ref{e(X)}, 
one can show that 
$\I(\cdot,\cdot)$ defined by 
$\I(X,T):=V^*(T\otimes \A(X))V$ 
for each $X\subseteq \Omega$ and $T\in \mathcal{L(K)}$ 
is an instrument although this representation 
is not necessarily the minimal representation. 
Because $\I(X_1,\cdot)=\Phi\hh_1(\cdot)$ and 
$\I(X_2,\cdot)=\Phi\hh_2(\cdot)$ hold, $\Phi\hh_1$ and 
$\Phi\hh_2$ are compatible. 
 \\
To prove ``only if'' part'' assume that $\Phi\hh_1$ and $\Phi\hh_2$ are 
compatible. Then there exists an instrument 
$\I(\cdot,\cdot)$ on $\Omega$ and $X_1,X_2\subset \Omega$ 
satisfying 
$\I(X_1,\cdot)=\Phi\hh_1(\cdot)$ and $\I(X_2,\cdot)
=\Phi\hh_2(\cdot)$. 
The instrument $\mathcal I$ is compatible with the channel
$\mathcal{I}(\Omega,\cdot)$. The ``only if'' part of Prop.~\ref{e(X)} implies
that with the minimal Stinespring representation of the channel
$\mathcal{I}(\Omega,\cdot)$ there exists a POVM $\{\mathsf{A}(x)\}$
satisfying $\mathcal{I}(X, T) =V^\ast(T \otimes A(X) )V$ for any $X$.
The effects $E= \mathsf{A}(X_1)$ and $F=\mathsf{A}(X_2)$ are
compatible and we obtain the wanted equations.

\end{proof}

Similar characterizations of compatibility between other combinations are easily derived. 
For instance, we have the following. 
(Because the proof is similar to the above proposition, 
it is omitted.)  

\begin{proposition}(operation-effect compatiblity)
Let $\Phi\hh$ be an operation and $E$ an effect. 
They are compatible if and only if there exist a Hilbert space $\hik'$, an isometry $V:\mathcal{H}
\to \hik \otimes \hik'$, and compatible effects $F,G \in \mathcal{L(K')}$ 
satisfying 
\begin{align*}
 \Phi\hh(T) & =V^*(T
\otimes F)V \qquad \forall T\in\mathcal{L(K)}, \\
 E & =V^*(\id \otimes G)V.
\end{align*}
\end{proposition}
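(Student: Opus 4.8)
The plan is to reduce the statement to the instrument-channel compatibility result (Prop.~\ref{e(X)}), exactly as in the operation-operation case, treating the effect $E$ as a coarse-grained piece of the very same observable that produces the operation $\Phi$. The whole argument rests on the observation that any instrument is compatible with its own total channel, so that Prop.~\ref{e(X)} can be applied to extract a single observable $\A$ on $\mathcal{L(K')}$ whose values simultaneously encode $\Phi$ and $E$.

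For the ``only if'' direction I would start from Definition~\ref{def:comp}: if $\Phi$ and $E$ are compatible, there is an instrument $\I$ on some $\Omega$ together with subsets $Y,X\subseteq\Omega$ such that $\I\hh(Y,\cdot)=\Phi\hh(\cdot)$ (the operation part) and $\I\hh(X,\id)=E$ (the effect part). Since $\I$ is compatible with its total channel $\I\hh(\Omega,\cdot)$, the ``only if'' part of Prop.~\ref{e(X)} provides a minimal Stinespring isometry $V:\hi\to\hik\otimes\hik'$ of $\I\hh(\Omega,\cdot)$ and an observable $\A$ on $\mathcal{L(K')}$ with $\I\hh(x,T)=V^*(T\otimes\A(x))V$ for all $x\in\Omega$ and $T\in\lk$. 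Setting $F:=\A(Y)$ and $G:=\A(X)$ then yields both claimed equations directly, and $F,G$ are coexistent, hence compatible, because both arise as values of the single observable $\A$ (this is precisely condition \eqref{eq:coex}).

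For the ``if'' direction I would run the argument backwards. Given compatible $F,G$, their coexistence supplies an observable $\A$ on some $\Omega$ and subsets $X_1,X_2\subseteq\Omega$ with $F=\A(X_1)$ and $G=\A(X_2)$. Defining $\I(X,T):=V^*(T\otimes\A(X))V$ and following the ``if'' part of Prop.~\ref{e(X)}, this $\I$ is an instrument: each $\I(x,\cdot)$ is completely positive as a compression of $T\mapsto T\otimes\A(x)$, and the isometry property of $V$ together with $\A(\Omega)=\id$ makes $\I(\Omega,\cdot)=V^*(\,\cdot\otimes\id)V$ a channel. Then $\I(X_1,\cdot)=V^*(\,\cdot\otimes F)V=\Phi\hh(\cdot)$ shows $\Phi$ is part of $\I$, while $\I(X_2,\id)=V^*(\id\otimes G)V=E$ shows $E$ is part of $\I$, so $\Phi$ and $E$ are compatible.

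The steps are all routine once Prop.~\ref{e(X)} is available; the only point demanding care is the bookkeeping of the two different notions of ``part of an instrument'' in Definitions~\ref{def:part-of-i-1} and \ref{def:part-of-i-2} (the operation part uses the full map $\I(\cdot,\cdot)$, whereas the effect part uses only its evaluation at $\id$), and the verification that the $\I$ constructed in the ``if'' direction is genuinely an instrument rather than merely a family of operations. Since that verification is the same as in the proof of Prop.~\ref{e(X)}, I would simply cite it, which is why the authors note that the proof can be omitted.
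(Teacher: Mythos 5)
Your proposal is correct and follows exactly the route the paper intends: the authors omit this proof precisely because it is the same reduction to Proposition~\ref{e(X)} used for the operation--operation case, with the only adjustment being that the effect is recovered by evaluating $\I\hh(X,\cdot)$ at $\id$, i.e.\ $E=V^*(\id\otimes\A(X))V$. Your handling of both directions, including the coexistence of $F$ and $G$ as marginals of the single observable $\A$, matches the paper's argument.
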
 
%

\subsection{Incompatibility in terms of Kraus operators}\label{sec:kraus}

In this subsection we will focus on the situation when the input and output Hilbert spaces are the same, $\hi$.
The Kraus decomposition theorem \cite{SEO83} states that a map $\Phi\ss:\trh\to\trh$ is an operation if and only if there exists a countable set of bounded operators $\{K_j\}_{j\in J}\subset\lh$, labeled by an index set $J$, such that 
\begin{equation}\label{eq:kraus}
\Phi\ss(\cdot) = \sum_{j\in J} K_j \cdot K_j^\ast \, , \qquad \sum_{j\in J} K_j^* K_j\leq \id \, .
\end{equation}
For a fixed operation $\Phi$, the choice of operators $K_j$, referred to as \emph{Kraus operators}, is not unique.
In any case, when comparing two Kraus decompositions we can always assume that they 
have the same number of elements by adding null operators if necessary. 
We typically choose $J\subseteq\nat$.

Suppose $\I$ is an instrument. 
We fix a Kraus decomposition $\{K_{x;j}\}$ for each operation $\I(x,\cdot)$, hence $\I$ can be written in the form
\begin{equation*}
\I\ss(x,\varrho) = \sum_j K_{x;j} \varrho K_{x;j}^\ast \qquad \forall \varrho\in\sh \, .
\end{equation*}
Conversely, a countable set of bounded operators $\{K_j\}_{j\in J}\subset\lh$ that satisfies $\sum_j K_j^\ast K_j = \id$ determines an instrument.
We can simply choose $\Omega=J$ and define $\I\ss(j,\varrho) = K_{j} \varrho K_{j}^\ast$.

Since instruments can be written in Kraus decomposition, it is clear that the relations of compatibility and weak compatibility can be formulated in terms of Kraus operators.
In the following we give formulations for the operation-operation and operation-effect pairs.

\begin{proposition}\label{prop:comp-kraus-oo}
Two operations $\Phi_1$ and $\Phi_2$ are:
\begin{itemize}
\item[(a)]  compatible if and only if there exists a sequence of bounded operators $\{K_j\}_{j\in J}$ and index subsets $J_1,J_2\subseteq J$ such that
\begin{equation}\label{eq:kraus-c-1}
\Phi_{1}\ss(\cdot) = \sum_{j\in J_1} K_j \cdot K^\ast_j \, , \quad 
\Phi_{2}\ss(\cdot)= \sum_{j\in J_2} K_j \cdot K_j^\ast \, 
\end{equation}
and
\begin{equation}
\sum_{j\in J} K^\ast_j K_j  = \id \, ;
\end{equation}
\item[(b)]  weakly compatible if and only if there exist sequences of bounded operators $\{K_j\}_{j\in J}$, $\{L_j\}_{j\in J}$ and index subsets $J_1,J_2\subseteq J$  such that
\begin{equation}\label{eq:opopkrausdecomposition}
\Phi_{1}\ss(\cdot) = \sum_{j\in J_1} K_j \cdot K^\ast_j \, , \quad 
\Phi_{2}\ss(\cdot)= \sum_{j\in J_2} L_j \cdot L_j^\ast 
\end{equation}
and
\begin{eqnarray}
\sum_{j\in J} K^\ast_j K_j  &=& \sum_{j\in J} L_j^\ast L_j=\id\, ,\\
\sum_{j\in J} K_j\cdot K^\ast_j  &=& \sum_{j\in J} L_j\cdot L^\ast_j\, .
\end{eqnarray}
\end{itemize}
\end{proposition}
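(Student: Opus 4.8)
The plan is to exploit the correspondence, recorded just above the statement, between instruments and Kraus families: a family $\{K_j\}_{j\in J}$ with $\sum_{j\in J} K_j^* K_j = \id$ defines an instrument on the outcome set $\Omega=J$ through $\I\ss(j,\cdot)=K_j\cdot K_j^*$, and conversely any instrument can be brought to this single-Kraus-operator-per-outcome form by fixing a Kraus decomposition $\{K_{x;j}\}$ of each operation $\I\ss(x,\cdot)$ and relabelling the pair $(x,j)$ by a single index. Once this dictionary is in place, both parts reduce to translating the set-theoretic conditions of Definition~\ref{def:comp} (for (a)) and of weak compatibility (for (b)) into statements about index subsets; no analytic input beyond the Kraus decomposition \eqref{eq:kraus} is required.

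For part (a), the ``if'' direction is immediate: given $\{K_j\}_{j\in J}$ and $J_1,J_2$, the normalization $\sum_{j\in J}K_j^*K_j=\id$ makes the family an instrument $\I$ on $\Omega=J$, and by construction $\Phi_1\ss(\cdot)=\I(J_1,\cdot)$ and $\Phi_2\ss(\cdot)=\I(J_2,\cdot)$, so both operations are parts of $\I$ and hence compatible. For the ``only if'' direction I would start from an instrument $\I$ with $\Phi_1\ss(\cdot)=\I(X_1,\cdot)$ and $\Phi_2\ss(\cdot)=\I(X_2,\cdot)$, refine it to single-Kraus form with index set $J=\{(x,j)\}$, and set $J_a=\{(x,j):x\in X_a\}$ for $a=1,2$. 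The channel property of $\I(\Omega,\cdot)$ becomes $\sum_{j\in J}K_j^*K_j=\id$, and summing the Kraus operators over $J_a$ recovers $\Phi_a$.

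For part (b) I would argue directly from the definition of weak compatibility, which asks for two instruments $\I_1,\I_2$ with $\Phi_1$ a part of the first and $\Phi_2$ a part of the second, satisfying $\I_1(\Omega,\cdot)=\I_2(\Omega,\cdot)$. The ``if'' direction builds $\I_1$ from $\{K_j\}$ and $\I_2$ from $\{L_j\}$ just as in (a); the two normalizations make these genuine instruments, and the condition $\sum_{j}K_j\cdot K_j^*=\sum_{j}L_j\cdot L_j^*$ is exactly $\I_1(\Omega,\cdot)=\I_2(\Omega,\cdot)$, so the two operations are weakly compatible. The ``only if'' direction refines $\I_1$ and $\I_2$ separately into single-Kraus families $\{K_j\}$ and $\{L_j\}$; since these a priori carry different index sets, I would pad the shorter one with null operators to a common $J$, as permitted by the earlier remark that two Kraus decompositions may be assumed to have the same number of elements. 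The shared-channel equality then becomes the last displayed identity.

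The content is bookkeeping rather than analysis, and the point that needs care is precisely this bookkeeping in the two ``only if'' directions: one must verify that relabelling the double index $(x,j)$ by a single index sends each $X_a\subseteq\Omega$ to a well-defined $J_a\subseteq J$, and that in (b) the total-channel equality survives padding with null operators (which it does, since appended null operators contribute nothing to any of the sums).
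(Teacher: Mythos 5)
Your proposal is correct and follows essentially the same route as the paper: part (b) is proved there exactly by your bookkeeping argument (building instruments from the Kraus families for the ``if'' direction, and for ``only if'' refining each instrument to a per-outcome Kraus family, padding with null operators to a common index set, and reading off the shared-channel identity). The only difference is cosmetic: for part (a) the paper simply cites Prop.~2 of \cite{HeReStZi09}, whereas you spell out the same relabelling argument directly, which is fine.
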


\begin{proof}
\begin{itemize}
\item[(a)] See \cite{HeReStZi09}, Prop.~2.
\item[(b)] The ``if'' part is simple --- define $\I_{1}\ss(j,\varrho):=K_j\varrho K_j^\ast$ 
and $\I_{2}\ss(j,\varrho):=L_j\varrho L_j^\ast$. 
Then clearly $\Phi_1$ is part of $\I_1$ and $\Phi_2$ is part of $\I_2$ while equality $\I_1(J,\cdot)=\I_2(J,\cdot)$ holds.

The ``only if'' part is proved as follows. 
Suppose $\Phi_1$ and $\Phi_2$ are weakly compatible.
Then there exist instruments $\I_1$ and $\I_2$ such that $\I_1(\Omega_1,\cdot)=\I_2(\Omega_2,\cdot)$ while $\Phi_a(\cdot)=\I_a(X_a,\cdot)$ for $a=1,2$ and some $X_1$ and $X_2$.
Taking union of Kraus decompositions for $\Phi_1$ and $\I_1(\Omega_1\setminus X_1,\cdot)$ we obtain Kraus decomposition $\{K_j\}_{j\in J}$ of $\I_1(\Omega_1,\cdot)$ such that $\Phi_1$ is expressed via the subset $J_1\subseteq J\subseteq\nat$ of these Kraus operators.
Similarly we obtain Kraus operators for the second instrument  $\{L_j\}_{j\in J'}$ such that $\Phi_2$ is decomposed via subset $J_2\subseteq J'\subseteq\nat$ of these Kraus operators.
The index set can be chosen to be $\nat$ for both decompositions, as we can always supplement a set of Kraus operators by zero operators.
Thus, Eq.~(\ref{eq:opopkrausdecomposition}) follows.
The remaining two equations follow from the fact that $\I_1\hh(\Omega_1,\id)=\I_2\hh(\Omega_2,\id)=\id$ and that $\I_1(\Omega_1,\cdot)=\I_2(\Omega_2,\cdot)$.
\end{itemize}
\end{proof}

In a similar way we can also prove the following result for operation-effect pairs.

\begin{proposition}\label{prop:comp-kraus-oe}
An operation $\Phi$ and an effect $E$ are:
\begin{itemize}
\item[(a)] compatible if and only if there exists a sequence of bounded operators $\{K_j\}_{j\in J}$ and index subsets $J_1,J_2\subseteq J$ such that
\begin{equation}\label{eq:kraus-oe-1}
\Phi\ss(\cdot) = \sum_{j\in J_1} K_j \cdot K^\ast_j \, , \quad 
E= \sum_{j\in J_2} K_j^\ast K_j \, 
\end{equation}
and
\begin{equation}
\sum_{j\in J} K^\ast_j K_j  = \id \, .
\end{equation}

 \item[(b)] weakly compatible if and only if there exist sequences of bounded operators $\{K_j\}_{j\in J}$, $\{L_j\}_{j\in J}$ and index subsets $J_1,J_2\subseteq J$  such that
\begin{equation}
\Phi\ss(\cdot) = \sum_{j\in J_1} K_j \cdot K^\ast_j \, , \quad E= \sum_{j\in J_2} L_j^\ast L_j 
\end{equation}
and
\begin{subequations}
\label{eq:kraussums}
\begin{eqnarray}
\sum_{j\in J} K^\ast_j K_j  &=& \sum_{j\in J} L_j^\ast L_j=\id\, ,\\
\sum_{j\in J} K_j\cdot K^\ast_j  &=& \sum_{j\in J} L_j\cdot L^\ast_j\, .
\end{eqnarray}
\end{subequations}
 \end{itemize}
\end{proposition}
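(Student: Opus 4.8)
The plan is to mimic the proof of Proposition~\ref{prop:comp-kraus-oo}, taking care of one structural asymmetry: an effect appearing as part of an instrument is obtained through the Heisenberg-picture total map $\I\hh(X,\id)$, which in terms of Kraus operators reads $\sum_j K_j^\ast K_j$, whereas an operation that is part of an instrument has the Schr\"odinger form $\sum_j K_j\cdot K_j^\ast$. This is exactly what produces the $K_j^\ast K_j$ appearing in the expression for $E$, while $\Phi$ retains the $K_j\cdot K_j^\ast$ form.

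For part~(a), the ``if'' direction: given $\{K_j\}_{j\in J}$ with $\sum_{j\in J} K_j^\ast K_j=\id$, I would define a single instrument $\I$ on the outcome set $\Omega=J$ by $\I\ss(j,\varrho)=K_j\varrho K_j^\ast$. Then $\Phi\ss(\cdot)=\I\ss(J_1,\cdot)$ exhibits $\Phi$ as part of $\I$, and $E=\sum_{j\in J_2}K_j^\ast K_j=\I\hh(J_2,\id)$ exhibits $E$ as part of the same $\I$; hence they are compatible. For the ``only if'' direction, compatibility yields one instrument $\I$ with $\Phi\ss(\cdot)=\I\ss(X_1,\cdot)$ and $E=\I\hh(X_2,\id)$. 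Fixing a Kraus decomposition $\{K_{x;j}\}$ of each $\I(x,\cdot)$ and flattening the pairs $(x,j)$ into a single index set $J\subseteq\nat$, I would let $J_1$ (resp.\ $J_2$) collect the indices whose outcome lies in $X_1$ (resp.\ $X_2$). The three required identities then read off directly: the Schr\"odinger sum over $J_1$ gives $\Phi$, the sum of $K_j^\ast K_j$ over $J_2$ gives $E$, and the sum of $K_j^\ast K_j$ over all of $J$ equals $\I\hh(\Omega,\id)=\id$ because $\I(\Omega,\cdot)$ is a channel.

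For part~(b), the argument is the same but with two instruments $\I_1,\I_2$ sharing the same total channel. In the ``if'' direction I would set $\I_1\ss(j,\varrho)=K_j\varrho K_j^\ast$ and $\I_2\ss(j,\varrho)=L_j\varrho L_j^\ast$ on $\Omega=J$; the first two displayed conditions make $\Phi$ part of $\I_1$ and $E$ part of $\I_2$, while the identity $\sum_j K_j\cdot K_j^\ast=\sum_j L_j\cdot L_j^\ast$ is precisely $\I_1(J,\cdot)=\I_2(J,\cdot)$, giving weak compatibility. Conversely, weak compatibility provides $\I_1,\I_2$ with $\I_1(\Omega_1,\cdot)=\I_2(\Omega_2,\cdot)$, $\Phi\ss(\cdot)=\I_1\ss(X_1,\cdot)$ and $E=\I_2\hh(X_2,\id)$; flattening the Kraus decompositions of the two instruments into index sets that are both padded to $\nat$ by null operators produces $\{K_j\}$ and $\{L_j\}$, and the four required equations follow as before --- the two normalization equalities from the channel property, and the last equality from $\I_1(\Omega_1,\cdot)=\I_2(\Omega_2,\cdot)$.

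I expect no genuine obstacle: the content is entirely bookkeeping of the kind already validated in Proposition~\ref{prop:comp-kraus-oo}. The one place where an error could slip in, and hence the step I would treat most carefully, is keeping the Heisenberg/Schr\"odinger accounting straight --- remembering that the effect contributes $K_j^\ast K_j$ (adjoint action, since $E=\I\hh(X,\id)$) while the operation contributes $K_j\cdot K_j^\ast$. Conflating these two would yield the wrong decomposition for $E$.
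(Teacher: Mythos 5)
Your proof is correct, and the accounting you flag as the delicate point (the effect enters as $\I\hh(X,\id)=\sum K_j^\ast K_j$ while the operation enters as $\sum K_j\cdot K_j^\ast$) is handled properly. However, the paper does not redo the instrument-flattening argument at all: it obtains Prop.~\ref{prop:comp-kraus-oe} as an immediate corollary of Prop.~\ref{prop:comp-kraus-oo} combined with Prop.~\ref{prop:simple-2} (for part (a)) and Prop.~\ref{prop:simple-3}(b) (for part (b)). Concretely, the paper replaces the effect $E$ by an operation $\Phi'$ with ${\Phi'}\hh(\id)=E$ that is (weakly) compatible with $\Phi$, applies the operation--operation Kraus characterization to the pair $(\Phi,\Phi')$, and then reads off $E=\sum_{j\in J_2}K_j^\ast K_j$ from ${\Phi'}\ss(\cdot)=\sum_{j\in J_2}K_j\cdot K_j^\ast$; the converse direction likewise passes through Prop.~\ref{prop:simple}. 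Your direct route is self-contained and makes the mechanism visible, at the cost of repeating the bookkeeping already validated in Prop.~\ref{prop:comp-kraus-oo}; the paper's reduction is shorter and reuses the earlier result, at the cost of hiding where the $K_j^\ast K_j$ form actually comes from. Both are valid, and they rest on the same underlying construction.
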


\begin{proof}
The proofs of these claims follow from Prop.~\ref{prop:comp-kraus-oo} when taken together with Prop.~\ref{prop:simple-2} and Prop.~\ref{prop:simple-3}b, respectively.
\end{proof}


\section{Incompatibility in terms of measurement models}\label{sec:memo}

The concepts and results introduced so far can be put into a wider perspective by considering measurement models.
We start by recalling some basic definitions from quantum measurement theory \cite{QTM96}. 

\begin{figure}
\begin{center}
\includegraphics[width=6.0cm]{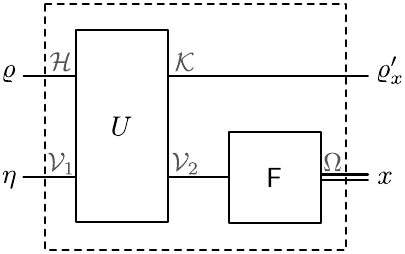}
\end{center}
\caption{\label{fig:memo} Measurement model $\memo=(\hv_1,\hv_2,\eta,U,\F)$. 
\newline\noindent
Here $\eta$ is an initial state on the ancillary system $\hv_1$ which is connected with the measured system $\hi$ by a global unitary operator $U$. 
}
\end{figure}

\begin{definition}
A quintuple $\memo=(\hv_1,\hv_2,\eta,U,\F)$ is a (generalized) measurement model if 
\begin{tabbing}
\hspace{1cm} \= $\hv_1$, $\hv_2$ \=  are Hilbert spaces, \\
\> $\eta$ \> is a state on $\hv_1$,\\
\> $U$ \> is a unitary operator from $\hi\otimes\hv_1$ 
to $\hik\otimes\hv_2$, \\
\> $\F$ \>  is an observable in $\hv_2$. 
\end{tabbing}
\end{definition}

The observable $\F$, called \emph{pointer observable}, gives us a measurement outcome $x\in \Omega$.
An input state $\varrho$ is transformed into a state $\varrho'_x$ (conditioned on $x$) --- see Fig.~\ref{fig:memo}.
The measurement outcome probabilities and the state transformations are given by the usual quantum formulae.
Namely, an outcome $x\in\Omega$ is recorded with the probability
\begin{equation}
\label{eq:memo_probability}
p(x\mid\varrho)=\tr{U\varrho\otimes\eta U^\ast \id_\hik\otimes\F(x)}
\end{equation}
and the input state $\varrho$ transforms into the unnormalized state $\varrho'_x$,
\begin{equation}
\label{eq:memo_state}
\varrho'_x = \trvout{U\varrho\otimes\eta U^\ast \id_\hik\otimes\F(x)} \, .
\end{equation}
Here $\trvout{\cdot}$ denotes the partial trace over ancillary Hilbert space $\hv_2$. 
In these formulas we considered only simple events which are of the form ``The obtained measurement outcome is $42$.''
We do not have to consider only those events that correspond to single measurement outcomes $x\in\Omega$, but we can group measurement outcomes into subsets. This means that we can also consider events of the form ``The obtained measurement outcome is between $1$ and $10$.''
Therefore we can replace $x$ by $X$ in Eqs.~(\ref{eq:memo_probability}) and (\ref{eq:memo_state}).
Similarly as the definitions of ``being part of an instrument'' we can define useful notions of ``being part of a measurement model'' --- we say that:
\begin{itemize}
\item an effect $E$ is part of $\memo$ if there exists a set $X\subseteq\Omega$ such that
\begin{align}
E =  \trvin{\id_\hi\otimes\eta U^\ast \id_\hik\otimes\F(X)U} \, ;
\end{align}
\item an operation $\Phi$ is part of $\memo$ if there exists a set $X\subseteq\Omega$ such that
\begin{equation}
\Phi\ss(\varrho) = \trvout{U\varrho\otimes\eta U^\ast \id_\hik\otimes\F(X)}  \qquad \forall\varrho\in\sh  ,
\end{equation}
or equivalently
\begin{equation}
\Phi\hh(T) = \trvin{\id_\hi\otimes\eta U^\ast T \otimes\F(X) U }   \qquad \forall T \in\lh \, .
\end{equation}
\end{itemize}

Being part of $\memo$ simply means that, having the measurement model $\memo$ available, we can implement $E$ (resp. $\Phi$) by ignoring everything else but some component of $\memo$ --- see Fig.~\ref{fig:memo_effect_operation}.
Since channels are special types of operations, we see that:
\begin{itemize}
\item a channel $\Lambda$ is part of $\memo$ if
\begin{equation}
\label{eq:channel}
\Lambda\ss(\varrho)= \trvout{U\varrho\otimes\eta U^\ast} \qquad \forall\varrho\in\sh \, ,
\end{equation}
or equivalently
\begin{equation}
\label{eq:channelH}
\Lambda\hh(T) = \trvin{\id_\hi\otimes\eta U^\ast T \otimes\id_{\hv_2} U }  \qquad \forall T \in\lh \, .
\end{equation}
\end{itemize}
Clearly, each measurement model determines a unique channel. 
Useful observation is that this channel does not depend on the choice of the pointer observable $\F$, but only on the ancillary state $\eta$ and measurement coupling $U$.

\begin{figure}
\begin{center}
a) \includegraphics[width=4.0cm]{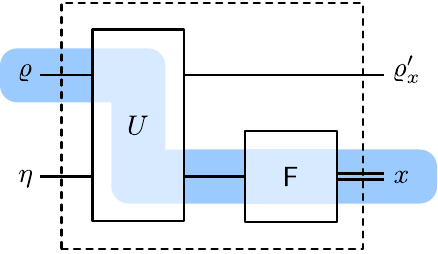}\hskip1cm b) \includegraphics[width=4.0cm]{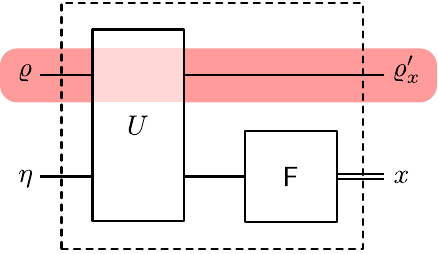}
\end{center}
\caption{\label{fig:memo_effect_operation} Some parts of a measurement model are  a) effect and b) operation.}
\end{figure}

As it was the case with the definitions through instruments, also here we need to take into consideration that observables and instruments do not describe single events but collections of possible events.
Since the measurement outcomes can be regrouped and relabeled after the measurement is performed, we again include a \emph{pointer function} into our description.
 Thus, we say that
\begin{itemize}
\item an observable $\A$ with an outcome set $\Omega'$ is part of $\memo$ if there exists a function $f:\Omega\to\Omega'$ such that
\begin{equation}\label{eq:observable}
\A(X) =  \trvin{\id_\hi\otimes\eta U^\ast \id_\hik\otimes\F(f^{-1}(X))U} \qquad \forall X\subseteq\Omega'\, ;
\end{equation}
\item an instrument $\I$  with an outcome set $\Omega'$ is part of $\memo$ if there exists a function $f:\Omega\to\Omega'$ such that $\forall\varrho\in\sh \, , X\subseteq\Omega'$
\begin{equation}\label{eq:instmemoS}
\I\ss(X,\varrho) = \trvout{U\varrho\otimes\eta U^\ast \id_\hik\otimes\F(f^{-1}(X))} \, ,
\end{equation}
or equivalently $\forall T\in\lh \, , X\subseteq\Omega'$
\begin{equation}\label{eq:instmemoH}
\I\hh(X,T) = \trvin{\id_\hi\otimes\eta U^\ast T\otimes\F(f^{-1}(X))U}  \, .
\end{equation}
\end{itemize}

It has been proved in \cite{Ozawa84} that every instrument $\I\hh$ from $\mathcal L(\hi)$ to $\mathcal L(\hi)$ is part of a measurement model.
We are going to need not only that result but also its proof in the following text, so we present the proof of this fact in the case of a finite outcome space for reader's convenience.

\begin{proposition}\label{prop:ItoMEMO}
Let $\I\hh$ be an instrument from 
$\mathcal L(\hik)$ to $\mathcal L(\hi)$.  Then there exists a measurement model 
$\memo=(\hv_1,\hv_2,\eta,U,F)$ satisfying 
\begin{equation}
\I\hh(X,T)= \trvin{\id_\hi\otimes\eta U^\ast T \otimes\F(X)U}  \, 
\end{equation}
for all subsets $X\subseteq \Omega$. 
\end{proposition}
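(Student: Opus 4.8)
The plan is to construct the measurement model explicitly from a Kraus decomposition of the instrument, using a standard dilation in which the ancilla carries a copy of the outcome label. First I would fix, for each $x\in\Omega$, a Kraus decomposition $\I\ss(x,\varrho)=\sum_{j} K_{x;j}\,\varrho\,K_{x;j}^\ast$ of the operation $\I(x,\cdot)$, as guaranteed by the Kraus decomposition theorem cited in Section~\ref{sec:kraus}. Since $\I(\Omega,\cdot)$ is a channel, these operators satisfy $\sum_{x\in\Omega}\sum_{j} K_{x;j}^\ast K_{x;j}=\id_\hi$. Because $\Omega$ is finite and we may pad each decomposition with null operators, I can assume the index $j$ ranges over a common finite (or countable) set.

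Next I would choose the ingredients of $\memo=(\hv_1,\hv_2,\eta,U,\F)$. I take $\hv_1$ and $\hv_2$ to be Hilbert spaces large enough to carry an orthonormal family indexed by the pairs $(x,j)$; concretely let $\hv_2$ have an orthonormal basis $\{\ket{x,j}\}$, fix a unit vector $\ket{\phi}\in\hv_1$ and set $\eta=\kb{\phi}{\phi}$, and define the pointer observable by $\F(x)=\sum_{j}\kb{x,j}{x,j}$, so that $\F(x)$ projects onto the sector labelled by the outcome $x$ and $\sum_{x}\F(x)=\id_{\hv_2}$. The coupling $U$ is defined on the generating vectors by
\begin{equation*}
U\bigl(\psi\otimes\ket{\phi}\bigr)=\sum_{x,j}\bigl(K_{x;j}\psi\bigr)\otimes\ket{x,j},
\end{equation*}
extended by linearity. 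The key computation is that $U$ is an isometry on $\hi\otimes\ket{\phi}$, which follows directly from the normalization $\sum_{x,j}K_{x;j}^\ast K_{x;j}=\id_\hi$; one then extends it to a unitary from $\hi\otimes\hv_1$ to $\hik\otimes\hv_2$ by enlarging the spaces if necessary, invoking that an isometry between Hilbert spaces of equal (finite or countably infinite) dimension extends to a unitary.

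Finally I would verify Eq.~\eqref{eq:instmemoH} with the identity pointer function. Inserting the definitions of $U$, $\eta$ and $\F(X)=\sum_{x\in X}\F(x)$ into $\trvin{\id_\hi\otimes\eta\,U^\ast\,T\otimes\F(X)\,U}$ and using the orthonormality of $\{\ket{x,j}\}$ collapses the double sum to $\sum_{x\in X}\sum_{j}K_{x;j}^\ast\,T\,K_{x;j}$, which is exactly $\I\hh(X,T)$ in the Heisenberg picture. I expect the main obstacle to be purely bookkeeping rather than conceptual: one must check carefully that the partial trace over $\hv_2$ in the Heisenberg-picture formula, combined with the projective pointer $\F(X)$, reproduces precisely the summation over the sector $x\in X$ with all $j$, and that the isometry-to-unitary extension does not disturb the outcome statistics. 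Handling the dimension-matching so that $U$ becomes genuinely unitary (padding $\hv_1$ or $\hik\otimes\hv_2$ as needed) is the one step that requires a small amount of care.
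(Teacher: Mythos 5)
Your construction is correct, but it follows a genuinely different route from the paper's. The paper first takes the \emph{minimal} Stinespring representation $(\mathcal{A},V)$ of the total channel $\I\hh(\Omega,\cdot)$ and invokes the generalized Radon--Nikodym theorem (Prop.~\ref{e(X)}) to obtain a POVM $\E$ on $\mathcal{L}(\mathcal{A})$ with $\I\hh(X,T)=V^*(T\otimes\E(X))V$; it then carries out the isometry-to-unitary extension explicitly via the operators $\widetilde{S}$, $G$, $H$ on $\hi\oplus(\hik\otimes\mathcal{A}\otimes\mathcal{M})$. You instead build a non-minimal Stinespring isometry directly from Kraus operators, $\psi\mapsto\sum_{x,j}(K_{x;j}\psi)\otimes\ket{x,j}$, and your verification that the partial trace collapses to $\sum_{x\in X}\sum_j K_{x;j}^*TK_{x;j}=\I\hh(X,T)$ is sound; the dimension-matching you flag is handled exactly as you suggest (both orthogonal complements are separable infinite-dimensional once $\hv_1$ and the ancilla carrying $\{\ket{x,j}\}$ are taken infinite-dimensional). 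What your approach buys is self-containedness --- no appeal to Prop.~\ref{e(X)} --- and a \emph{projective} pointer observable for free. What it loses is the feature the paper actually needs downstream: in the paper's construction the quadruple $(\hv_1,\hv_2,\eta,U)$ depends only on the total channel $\I\hh(\Omega,\cdot)$ and not on the finer structure of the instrument, which is precisely what makes Prop.~\ref{prop:memo_almost_compat} (and hence Prop.~\ref{prop:memo_compat}(b) on weak compatibility) an immediate corollary of the proof. Your $U$ depends on the chosen Kraus decomposition of each individual $\I(x,\cdot)$, so two instruments with the same total channel would a priori get different couplings; recovering the paper's corollary from your construction would require relating the two Kraus families, which essentially reintroduces the Radon--Nikodym step. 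So: correct as a proof of the proposition as stated, but not a drop-in replacement for the paper's argument.
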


\begin{figure}
\begin{center}
\includegraphics{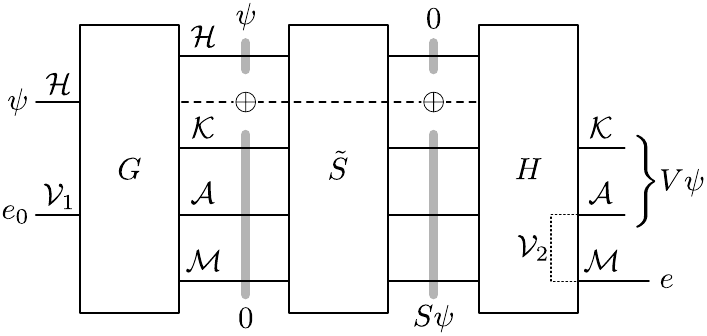}
\end{center}
\caption{\label{fig:ItoMEMO} Stinespring representation with extension showing that for every instrument there exists some measurement model.}
\end{figure}

\begin{proof}
We fix a Stinespring representation ($\mathcal{A}$, $V$) for the channel $\I\hh(\Omega,\cdot)$, i.e.~$\mathcal{A}$ is a Hilbert space and $V : \hi \to \hik \otimes \mathcal{A}$ is 
an isometry. 
By Prop.~\ref{e(X)} there exists an observable $\E$ on $\mathcal{L(A)}$ satisfying 
\begin{equation}
\I\hh(X,T)=V^*(T\otimes \E(X))V 
\end{equation}
for all $T\in\mathcal{L(K)}$ and $X\subseteq\Omega$. 
We introduce (see also Fig.~\ref{fig:ItoMEMO}) an auxiliary Hilbert space $\mathcal{M}$ whose dimension is infinite and a unit vector $e\in \mathcal{M}$ to define an isometry $S:\hi \to \hik \otimes \mathcal{A}\otimes \mathcal{M}$ by 
\begin{equation}
S\psi:=(V\psi) \otimes e \, .
\end{equation}
Then
\begin{equation}
\I\hh(X,T)=S^*(T\otimes \E(X)\otimes \id_{\mathcal{M}})S \, .
\end{equation}
The isometry $S$ can be dilated to a unitary operator $\widetilde{S}$ from $\hi \oplus (\hik\otimes \mathcal{A}\otimes \mathcal{M})$ to itself by setting
\begin{equation}
\widetilde{S}(\psi\oplus \phi):=-S^*\phi\oplus (\sqrt{\id -S S^*}\phi +S\psi).
\end{equation}
Clearly, $\widetilde{S}(\psi\oplus 0)=0\oplus S\psi$. 
We introduce another infinite-dimensional Hilbert space $\hv_1$ and a unit vector $e_0\in \hv_1$.  
We can define a unitary operator $G: \hi \otimes \hv_1 \to \hi\oplus (\hik\otimes \mathcal{A}\otimes \mathcal{M})$ satisfying 
\begin{equation}
G\psi \otimes e_0=\psi \oplus 0.
\end{equation}  
As $\mathcal{M}$ is infinite-dimensional, we can also define a unitary operator $H: \hi\oplus (\hik\otimes \mathcal{A}\otimes \mathcal{M}) \to \hik\otimes \mathcal{A} \otimes \mathcal{M}$ 
satisfying 
\begin{equation}
H(0\oplus \phi\otimes e)=\phi\otimes e
\end{equation}
for all $\phi\in \hik\otimes \mathcal{A}$. 
We define $\hv_2:=\mathcal{A}\otimes \mathcal{M}$. 
Then $U:=H\widetilde{S}G: \hi \otimes \hv_1 \to\hik\otimes \hv_2$ is a unitary operator and it satisfies
\begin{equation}
U (\psi\otimes e_0) =S\psi=(V\psi) \otimes e.
\end{equation}
Set $\eta:=\kb{e_0}{e_0}$ and $\F(X):=\E(X)\otimes \id$. 
Now, for all $\psi\in\hi$, we have
\begin{eqnarray*}
\ip{\psi}{\I\hh(X,T)\psi}
&=&\ip{V\psi}{T\otimes \E(X))V\psi}\\
&=&\ip{(V\psi)\otimes e}{T\otimes \E(X)\otimes \id V(\psi)\otimes e}\\
&=&\ip{\psi\otimes e_0}{U^*(T\otimes \F(X))U \psi\otimes e_0}\\
&=&\langle \psi | \mbox{tr}_{\hv_1}[\id_{\mathcal{H}}\otimes
\eta U^* T\otimes \F(X) U]  \psi\rangle 
%
\end{eqnarray*}
This concludes the proof. 
\end{proof}

Based on this result we can now prove the following. 
\begin{proposition}\label{prop:memo_almost_compat}
Let $\I_1\hh$ and $\I_2\hh$ be two instruments from $\mathcal L(\hik)$ to $\mathcal L(\hi)$. If they satisfy $\I_1\hh(\Omega_1,\cdot)=\I_2\hh(\Omega_2,\cdot)$, then it is possible to set their measurement models as $\memo_1=(\hv_1,\hv_2,\eta,U,\F_1)$ and $\memo_2=(\hv_1,\hv_2,\eta,U,\F_2)$. 
\end{proposition}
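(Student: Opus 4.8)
The plan is to run the construction of Proposition~\ref{prop:ItoMEMO} simultaneously for both instruments, exploiting the fact that essentially every ingredient of that construction is fixed by the common total channel $\Lambda:=\I_1\hh(\Omega_1,\cdot)=\I_2\hh(\Omega_2,\cdot)$, while only the pointer observable carries the instrument-specific data. Thus I would aim to produce two measurement models that share $\hv_1,\hv_2,\eta,U$ and differ solely in $\F_1$ versus $\F_2$.

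First I would fix a single minimal Stinespring representation $(\mathcal{A},V)$ of the common channel $\Lambda$, so that $V:\hi\to\hik\otimes\mathcal{A}$ is an isometry with $\Lambda\hh(T)=V^*(T\otimes\id)V$. Each instrument $\I_j$ has its own total channel $\I_j\hh(\Omega_j,\cdot)=\Lambda$ as a part (take the full outcome set, cf.\ Prop.~\ref{prop:channel-part-instrument}), so $\I_j$ is compatible with $\Lambda$. Applying the instrument--channel characterization of Prop.~\ref{e(X)} to this \emph{same} $(\mathcal{A},V)$ then yields, for each $j=1,2$, an observable $\E_j$ on $\mathcal{L(A)}$ with
\[
\I_j\hh(X,T)=V^*(T\otimes\E_j(X))V
\]
for all $X\subseteq\Omega_j$ and $T\in\lk$. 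The decisive feature is that both instruments are now represented over the same Hilbert space $\mathcal{A}$ with the same isometry $V$, and differ only through the observables $\E_1,\E_2$.

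Next I would feed this common $V$ into the dilation procedure of Prop.~\ref{prop:ItoMEMO} and observe that every object built there --- the auxiliary space $\mathcal{M}$ and unit vector $e$, the isometry $S\psi=(V\psi)\otimes e$, its unitary dilation $\widetilde S$, the auxiliary space $\hv_1$ and vector $e_0$, the unitaries $G$ and $H$, and finally $\hv_2=\mathcal{A}\otimes\mathcal{M}$, $U=H\widetilde S G$ and $\eta=\kb{e_0}{e_0}$ --- depends on the instrument only through $V$, hence is identical for $j=1,2$. Setting $\F_j(X):=\E_j(X)\otimes\id_{\mathcal{M}}$, the final computation of Prop.~\ref{prop:ItoMEMO} reproduces
\[
\I_j\hh(X,T)=\trvin{\id_\hi\otimes\eta\,U^\ast\,T\otimes\F_j(X)\,U}
\]
for each $j$, so that $\memo_j=(\hv_1,\hv_2,\eta,U,\F_j)$ is a measurement model realizing $\I_j$, with all entries but the pointer shared.

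The only step that genuinely requires care, and what I regard as the heart of the matter, is the very first one: that a single Stinespring isometry $V$ can serve both instruments. This is precisely what the minimal representation of the shared channel $\Lambda$ together with Prop.~\ref{e(X)} delivers, since it places both $\E_1$ and $\E_2$ on the one space $\mathcal{A}$. Once that is secured, no further obstacle arises --- the remainder is a bookkeeping check that the construction of Prop.~\ref{prop:ItoMEMO} is ``channel-only'' apart from the pointer observable.
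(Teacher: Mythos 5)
Your proposal is correct and follows essentially the same route as the paper: the paper's own (much terser) proof simply observes that in the construction of Prop.~\ref{prop:ItoMEMO} the data $\hv_1,\hv_2,\eta,U$ depend only on the total channel $\I\hh(\Omega,\cdot)$, with the instrument entering solely through the pointer observable. Your version spells out the key enabling step (a single minimal Stinespring isometry $V$ serving both instruments via Prop.~\ref{e(X)}), which the paper leaves implicit.
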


\begin{proof}
The construction of $\hv_1$, $\hv_2$, $\eta$ and $U$ of the measurement model in the proof of Prop.~\ref{prop:ItoMEMO} depends only on $\I\hh(\Omega,\cdot)$ and is not observable $\E$ dependent. The only difference is possible in pointer observable which is $X$-dependent.
\end{proof}

\begin{proposition}\label{prop:memo_compat}
Two devices $\dev_1$ and $\dev_2$ are
\begin{itemize}
\item[(a)] compatible if and only if there exists a measurement model $\memo$ such that both $\dev_1$ and $\dev_2$ are parts of $\memo$;
\item[(b)] weakly compatible if and only if there exist two measurement models $\memo_1=(\hv_1,\hv_2,\eta,U,\F_1)$ and $\memo_2=(\hv_1,\hv_2,\eta,U,\F_2)$, differing only in their pointer observables $\F_1$ and $\F_2$, such that $\dev_1$ is part of $\memo_1$ and $\dev_2$ is part of $\memo_2$.
\end{itemize}
\end{proposition}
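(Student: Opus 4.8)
The plan is to prove both equivalences by shuttling between an instrument and a measurement model through the induced-instrument identity established in Prop.~\ref{prop:ItoMEMO}, namely
\[
\I\hh(X,T)= \trvin{\id_\hi\otimes\eta\, U^\ast\, T \otimes\F(X)\,U}.
\]
The central observation, which I would isolate first, is that a device is part of an instrument $\I$ precisely when it is part of \emph{any} measurement model $\memo$ whose induced instrument equals $\I$. This follows termwise by comparing Definitions \ref{def:part-of-i-1} and \ref{def:part-of-i-2} with the matching ``part of $\memo$'' conditions of Section \ref{sec:memo} and substituting the displayed identity: an effect $E$ is part of $\I$ iff $E=\I\hh(X,\id)$ for some $X$, which reads $E=\trvin{\id_\hi\otimes\eta\, U^\ast\,\id\otimes\F(X)\,U}$, i.e.~$E$ is part of $\memo$; the operation, observable and instrument cases are identical, with any pointer function $f$ carried along unchanged.

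For part (a), I would first treat the ``only if'' direction. Assuming $\dev_1$ and $\dev_2$ compatible, Definition \ref{def:comp} supplies a single instrument $\I$ having both as parts. Prop.~\ref{prop:ItoMEMO} produces a measurement model $\memo$ whose induced instrument is $\I$, and the correspondence above makes both $\dev_1$ and $\dev_2$ parts of $\memo$. For the converse, a common measurement model $\memo$ induces, through the displayed identity, a single map $\I$; one verifies that this is genuinely an instrument, since its total $\I\hh(\Omega,T)=\trvin{\id_\hi\otimes\eta\, U^\ast\, T\otimes\id\,U}$ is the channel determined by $\memo$ and each $\I\hh(x,\cdot)$ is completely positive. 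The correspondence then exhibits $\dev_1$ and $\dev_2$ as parts of $\I$, so they are compatible.

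For part (b), the ``only if'' direction begins from weak compatibility: there are instruments $\I_1,\I_2$ with $\dev_a$ part of $\I_a$ and $\I_1(\Omega,\cdot)=\I_2(\Omega,\cdot)$. Prop.~\ref{prop:memo_almost_compat} is tailored to exactly this hypothesis, yielding measurement models $\memo_1=(\hv_1,\hv_2,\eta,U,\F_1)$ and $\memo_2=(\hv_1,\hv_2,\eta,U,\F_2)$ that share $\hv_1,\hv_2,\eta,U$ and differ only in their pointer observables, with induced instruments $\I_1,\I_2$; the correspondence then places $\dev_a$ as a part of $\memo_a$. For the ``if'' direction, two models differing only in the pointer observable induce instruments $\I_1,\I_2$ whose total channels coincide, since the channel of a measurement model depends only on $\eta$ and $U$ and not on $\F$ (as noted after \eqref{eq:channelH}). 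Hence $\I_1(\Omega,\cdot)=\I_2(\Omega,\cdot)$ while $\dev_a$ is part of $\I_a$, which is precisely weak compatibility.

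The one step demanding genuine care is the ``if'' direction of (a): one must confirm that the map induced by a measurement model through the displayed identity really satisfies the instrument axioms, rather than assuming it. This is where I expect to spend most effort, although it is routine given the unitary/Stinespring structure underlying $\memo$; everything else is bookkeeping on the part-of relations.
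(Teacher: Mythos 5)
Your proposal is correct and follows essentially the same route as the paper: the ``only if'' directions rest on Prop.~\ref{prop:ItoMEMO} and Prop.~\ref{prop:memo_almost_compat}, and the ``if'' directions rest on the observation that a device which is part of $\memo$ is part of the instrument induced by $\memo$ via Eqs.~\eqref{eq:instmemoS}--\eqref{eq:instmemoH}. You spell out the part-of correspondence and the instrument axioms more explicitly than the paper does, but the substance is identical.
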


\begin{proof}
Propositions \ref{prop:ItoMEMO} and \ref{prop:memo_almost_compat} prove the only if part of both statements. The if part is easily concluded from Eqs.~(\ref{eq:instmemoS}) and (\ref{eq:instmemoH}) which show, that if device $\dev$ is part of measurement model $\memo$, then it is also part of instrument $\I$ corresponding to $\memo$ as all the possible devices that are parts of $\memo$ can be recovered also from $\I$.
\end{proof}

We can thus see that compatibility is equivalent to the existence of a common measurement model, while weak compatibility is equivalent to the existence of a common measurement model up to different choices of pointer observables.
This equivalence reveals the clear operational meaning behind these concepts.
We can even use Prop.~\ref{prop:memo_compat} as an alternative route to prove facts about compatibility and weak compatibility --- this is demonstrated in the following example that proves Prop.~\ref{prop:elementary} using measurement models.

\begin{example}\label{ex:swap} 
Let us consider a measurement model $\memo=(\hi,\hi,\eta,U,\F)$, where $\eta$ is an arbitrary state and $U=\swap$ is the \emph{swap operator} defined as
\begin{equation}
\swap \psi\otimes\varphi = \varphi\otimes\psi \qquad \forall\psi,\varphi\in\hi \, .
\end{equation}
From Eq.~\eqref{eq:observable} we get $\A=\F$ (when $f$ is chosen to be the identity function), which means that every observable is part of the same measurement model up to a change of a pointer observable (see also Fig.~\ref{fig:swap}).
Thus, we obtain an alternative proof of Prop.~\ref{prop:elementary}b.

\begin{figure}
\begin{center}
\includegraphics[height=3.5cm]{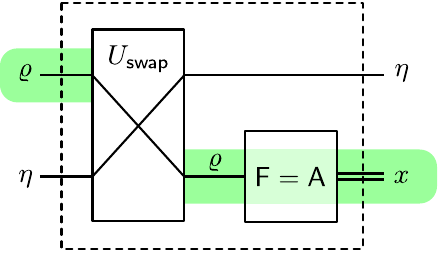}
\end{center}
\caption{\label{fig:swap} A measurement model with $U=\swap$ shows that all observables are weakly compatible as the choice of pointer observable $\F$ corresponds exactly to the observable in consideration $\A$.}
\end{figure}

Note that we can also see that Prop.~\ref{prop:elementary}a holds since Eq.~(\ref{eq:channel}) and (\ref{eq:channelH}) show that changing the pointer observable does not change the channel. This in turn means that a channel is incompatible with some device if and only if they are strongly incompatible, as you can set the pointer observable for channel to be the same as the pointer observable for the device.

\end{example}

\section{Examples of possible relations}\label{sec:examples}

In this section we show that all the incompatibility relations are possible between operations and effects, except the strong incompatibility of two effects.
The latter was noticed to be impossible in Proposition \ref{prop:elementary}. 
The overall situation is summarized in Table \ref{table:1}.
The first row in Table \ref{table:1} is clear --- there are compatible devices in all three different pairs.
The last entry of the second row is also clear since there exist incompatible effects.
We will demonstrate that the remaining four situations (circled) are possible.

\begin{table}[htdp]
\caption{Summary of possible relations. The circled points are demonstrated in this section.}
\begin{center}
\begin{tabular}{lccc}
\hline
& op-op & op-ef & ef-ef \\ \hline\hline
 compatible & \checkmark & \checkmark & \checkmark \\
 incompatible but weakly compatible & \textcircled{\checkmark} &  \textcircled{\checkmark} & \checkmark\\
strongly incompatible &  \textcircled{\checkmark} &  \textcircled{\checkmark} & $\times$\\
\hline
\end{tabular}
\end{center}
\label{table:1}
\end{table}

Our examples are all related to qubit systems, hence $\hi=\hik=\complex^2$.
Let $\sigma_x$, $\sigma_y$ and $\sigma_z$ be the Pauli operators on $\complex^2$.
We denote $P_{j} = \half (\id + \sigma_j)$ and $P_{-j} = \half (\id - \sigma_j)$ for $j=x,y,z$, and $P_{j}$ and $P_{-j}$ are hence one-dimensional projections. 

\begin{example}\label{ex:onlyweaklycompatibleops}
(\emph{Two operations that are incompatible but not strongly incompatible})
We consider operations $\Phi_{1}\ss(\varrho)=P_x\varrho P_x$ and $\Phi_{2}\ss(\varrho)=\half \sigma_x\varrho \sigma_x$.
They are both pure operations (i.e.~have only one Kraus operator), hence compatible if and only if they are comparable or $\Phi_1+\Phi_2$ is an operation (see the discussion before Prop.~\ref{prop:channel_for_wc}).
It is therefore easy to verify that they are incompatible.
To see that $\Phi_1$ and $\Phi_2$ are weakly compatible, we define a channel $\Lambda$ by
\begin{equation*}
\Lambda\ss(\varrho)=\half \varrho + \half \sigma_x\varrho \sigma_x \, .
\end{equation*}
Substituting $\id=P_{x} + P_{-x}$ and $\sigma_x = P_{x} - P_{-x}$ we see that $\Lambda$ can be written in the alternative form
\begin{equation*}
\Lambda\ss(\varrho)= P_{x} \varrho P_{x} + P_{-x}\varrho P_{-x} \, .
\end{equation*}
We thus have $\Phi_1\leq\Lambda$ and $\Phi_2\leq\Lambda$, therefore $\Phi_1$ and $\Phi_2$ are weakly compatible by Prop.~\ref{prop:channel_for_wc}.
\end{example}

It is easy to give examples of strongly incompatible channels as any pair of two different channels is incompatible.
In the following we provide more interesting example where the strongly incompatible operations are not channels.

\begin{example}\label{ex:stronglyincompatibleops}
(\emph{Two operations that are strongly incompatible})
We consider operations $\Phi_{1}\ss(\varrho)=P_{x}\varrho P_{x}$ and $\Phi_{2}\ss(\varrho)=P_{z}\varrho P_{z}$.
Since $\id-\Phi_1\hh(\id)=P_{-x}$ is a rank-1 operator, by Prop.~\ref{prop:order-rank-1} the operation $\Phi_1$ satisfies $\Phi_1\leq\Lambda_1$ for some channel $\Lambda_1$ iff
\begin{equation*}
\Lambda\ss_1(\varrho)=P_x \varrho P_x + \tr{\varrho P_{-x}} \xi_1
\end{equation*}
for some state $\xi_1$.
Similarly, the operation $\Phi_2$ satisfies $\Phi_2\leq\Lambda_2$ for a channel $\Lambda_2$ if and only if
\begin{equation*}
\Lambda\ss_2(\varrho)=P_z \varrho P_z + \tr{\varrho P_{-z}} \xi_2
\end{equation*}
for some state $\xi_2$.
We have $\Lambda\ss_1(P_x)=P_x$ and $\Lambda\ss_2(P_x)=\half P_z + \half \xi_2$. 
Since $P_x \neq \half P_z + \half \xi_2$ for any choice of $\xi_2$, we conclude that $\Lambda_1\neq\Lambda_2$ irrespective of the choices of $\xi_1$ and $\xi_2$.
Therefore, $\Phi_1$ and $\Phi_2$ are strongly incompatible.
\end{example}

\begin{example}
(\emph{Effect and operation that are incompatible but not strongly incompatible})
We consider the projection $P_x$ and the L\"uders operation $\Phi\ss(\varrho)=P_z \varrho P_z$.
Since the effects $P_x$ and $\Phi\hh(\id)=P_z$ are incompatible, we conclude from Prop.~\ref{prop:simple} that $P_x$ and $\Phi$ are incompatible.
To see that $P_x$ and $\Phi$ are weakly compatible, let us fix normalized eigenvectors $\phi_{x\pm}$ and $\phi_{z\pm}$ for $\sigma_x$ and $\sigma_z$, respectively. 
We observe that the channel $\varrho \mapsto \tr{\varrho} P_z$ can be written in the alternative forms
\begin{equation*}
 \tr{\varrho} P_z = \kb{\phi_{z+}}{\phi_{z+}} \varrho  \kb{\phi_{z+}}{\phi_{z+}} + \kb{\phi_{z+}}{\phi_{z-}} \varrho  \kb{\phi_{z-}}{\phi_{z+}}
\end{equation*}
and
\begin{equation*}
 \tr{\varrho} P_z = \kb{\phi_{z+}}{\phi_{x+}} \varrho  \kb{\phi_{x+}}{\phi_{z+}} + \kb{\phi_{z+}}{\phi_{x-}} \varrho  \kb{\phi_{x-}}{\phi_{z+}} \, .
\end{equation*}
Using Prop.~\ref{prop:comp-kraus-oe} we then conclude that $P_x$ and $\Phi$ are weakly compatible.
\end{example}

\begin{example}
(\emph{Effect and operation that are strongly incompatible})
We consider the projection $P_x$ and the L\"uders operation $\Phi\ss(\varrho)=A^{\half} \varrho A^{\half}$ with $A=P_z+ \half P_{-z}$.
Let us make a counter assumption that $P_x$ and $\Phi$ are weakly compatible.
By Prop.~\ref{prop:simple-3} this means that there is an operation $\Phi'$ weakly compatible with $\Phi$ and satisfying ${\Phi'}\hh(\id)=P_x$.
By Prop.~\ref{prop:channel_for_wc} there exists a channel $\Lambda$ such that $\Phi\leq\Lambda$ and $\Phi'\leq\Lambda$.
The effect $\id-A=\half P_{-z}$ is rank-1, hence by Prop.~\ref{prop:order-rank-1} we conclude that $\Phi\leq\Lambda$ is possible only if $\Lambda$ has the form
\begin{equation}\label{eq:lambda-1}
\Lambda\ss(\varrho)=\Phi\ss(\varrho) + \half \tr{\varrho P_{-z}} \xi 
\end{equation}
for some state $\xi$.
On the other hand, since ${\Phi'}\hh(\id)=P_x$ and $P_x$ is rank-1, then by Prop.~8 of \cite{HeWo10} we first have
\begin{equation*}
{\Phi'}\ss(\varrho)=\tr{\varrho P_x} \xi_1
\end{equation*}
for some state $\xi_1$ and, by applying Prop.~\ref{prop:order-rank-1} again, we conclude that $\Phi'\leq\Lambda$ is possible only if $\Lambda$ has the form
\begin{equation}\label{eq:lambda-2}
{\Lambda}\ss(\varrho)=\tr{\varrho P_{x}} \xi_1 +  \tr{\varrho P_{-x}} \xi_2  
\end{equation}
for some states $\xi_1,\xi_2$.
Inserting $\varrho=P_z,P_{-z}$ in both \eqref{eq:lambda-1} and \eqref{eq:lambda-2}, and equaling them, we obtain 
\begin{align*}
& P_z = \half \xi_1 + \half \xi_2\, ,
& \half P_{-z} + \half \xi = \half \xi_1 + \half \xi_2 \, .
\end{align*}
But $P_z\neq \half P_{-z} + \half \xi$ for any state $\xi$, hence we arrive to a contradiction and the counter assumption is therefore false.
\end{example}

\section{Conclusions}

The notions of coexistence and joint measurability are in this paper united into a single definition of compatibility. This is done by relating all measurement devices to instruments. This definition then allows one to study the compatibility of objects also of different types, e.g.~operations and effects. We defined also a tighter notion of incompatibility called strong incompatibility. 
These notions are explored by means of the Stinespring dilation, which shows an intriguing relation of compatibility features of the studied devices to the compatibility of the effects/observables underlying the construction of the dilation. These notions were also studied by Kraus decomposition. Relating the compatibility relations to measurement models illustrates an operational meaning of these notions in a simple way --- compatibility of two devices is conditioned by a single measurement model for both devices, while for weak compatibility the two devices are required to have a single measurement model up to the pointer observable. Both notions of compatibility are distinct in such a way that there exist devices which are weakly compatible, yet still incompatible.


\section{Acknowledgement}

T.H.~acknowledges financial support from the Academy of Finland (grant no. 138135).  
T.M.~acknowledges JSPS KAKENHI (grant no. 22740078).
D.R.~acknowledges financial support from the pro\-ject COQUIT.


\bibliographystyle{plain}

\end{document}